\documentclass[conference,compsoc]{IEEEtran}

\usepackage[T1]{fontenc}
\usepackage[utf8]{inputenc}
\usepackage{amsmath,amssymb,mathtools}
\usepackage{amsthm}
\usepackage{array}
\usepackage{booktabs}
\usepackage{tabularx}
\usepackage{enumitem}
\usepackage{graphicx}
\usepackage{microtype}
\usepackage{multirow}
\usepackage{pifont}
\usepackage{tikz}
\usetikzlibrary{arrows.meta,positioning,fit,calc,backgrounds}
\usepackage{xcolor}
\usepackage[most]{tcolorbox}
\usepackage{xspace}
\usepackage[hidelinks]{hyperref}
\hypersetup{
  pdftitle={When Does Authorization End? Effect Closure at Provider Boundaries},
  pdfauthor={Igor Santos-Grueiro}
}

\newcommand{\system}{\mbox{\textsc{EffectBound}}\xspace}
\newcommand{\cmark}{\ding{51}}
\newcommand{\xmark}{\ding{55}}
\newcommand{\realizable}{\mathsf{Realizable}}
\newcommand{\requiredcomplete}{\mathsf{RequiredComplete}}
\newcommand{\contractsettled}{\mathsf{ContractSettled}}

\providecommand{\Description}[1]{}

\newtheorem{definition}{Definition}
\newtheorem{proposition}{Proposition}
\newtheorem{theorem}{Theorem}
\newtheorem{corollary}{Corollary}

\newtcolorbox{rqanswer}[1]{
  enhanced jigsaw,
  lower separated=false,
  colback=white!90!gray,
  colframe=white!20!black,
  colbacktitle=white!50!gray,
  coltitle=black,
  arc=0.8mm,
  outer arc=0.8mm,
  boxrule=1.1pt,
  title={#1 summary},
  fonttitle=\bfseries\small,
  fontupper=\small,
  attach boxed title to top left={xshift=0.5cm,yshift=-2mm},
  boxed title style={
    colback=white!50!gray,
    colframe=white!20!black,
    boxrule=1.1pt,
    arc=0.8mm,
    outer arc=0.8mm,
    left=3pt,
    right=3pt,
    top=1pt,
    bottom=1pt
  },
  left=4pt,
  right=4pt,
  top=6pt,
  bottom=2.5pt,
  before skip=8pt plus 1pt minus 1pt,
  after skip=5pt plus 1pt minus 1pt,
  pad at break*=2pt
}

\newtcolorbox{effectcontract}{
  enhanced jigsaw,
  lower separated=false,
  colback=white,
  colframe=black!78,
  colbacktitle=black!78,
  coltitle=white,
  arc=0.8mm,
  outer arc=0.8mm,
  boxrule=1.1pt,
  title={Effect-closure contract},
  fonttitle=\bfseries\small,
  fontupper=\small,
  attach boxed title to top left={xshift=0.15in,yshift=-0.1in},
  boxed title style={
    colback=black!78,
    colframe=white,
    boxrule=0pt,
    arc=0.6mm,
    outer arc=0.6mm,
    left=3pt,
    right=3pt,
    top=1pt,
    bottom=1pt
  },
  left=4pt,
  right=4pt,
  top=6pt,
  bottom=3pt,
  before skip=8pt plus 1pt minus 1pt,
  after skip=5pt plus 1pt minus 1pt
}

\newtcolorbox{certificatebox}[1]{
  enhanced jigsaw,
  colback=white,
  colframe=black!70,
  colbacktitle=black!10,
  coltitle=black,
  arc=0.7mm,
  outer arc=0.7mm,
  boxrule=1.1pt,
  title={#1},
  fonttitle=\bfseries\small,
  fontupper=\small,
  left=4pt,
  right=4pt,
  top=3pt,
  bottom=2pt,
  before skip=4pt plus 1pt minus 1pt,
  after skip=5pt plus 1pt minus 1pt,
  pad at break*=0pt
}

\title{When Does Authorization End?
 Effect Closure at Provider Boundaries}
\author{
\IEEEauthorblockN{Igor Santos-Grueiro}
\IEEEauthorblockA{International University of La Rioja\\
igor.santosgrueiro@unir.net}
}

\begin{document}

\maketitle

\begin{abstract}
Revocation completion, clean state, or operation success can leave authorized work able to cause an effect the application rejects while the provider stays within its contract. We call the absence of all such paths \emph{policy-relative effect closure}, or \emph{effect closure} for short. Thus, a grant is closed when its existing authorizations retain no such path, and it cannot issue any new ones.

 We present \system, which uses an evidence-supported finite contract to decide whether an interface can truthfully report closure while required work completes. It reduces this to finite control with hidden state and returns a strategy, an impossibility certificate, or no verdict when evidence is
insufficient. Machine-checked proofs establish the reduction and checker soundness; the checker derives closure results and validates certificates.

Across GitHub, Kubernetes, NATS, and Kafka, closure fails in three ways: an interface lacks a needed control, clean visible state hides active work, or the model stops before the \emph{effect frontier}---the last point where the effect can be prevented. The GitHub tool cannot bind a merge to the reviewed
commit; a controlled run confirms that it may merge a different commit. NATS can report no stored or pending messages while dispatched work can still publish downstream. In Kafka, all fixed-set brokers had applied the revocation, yet an earlier authorized request could still append. We add a gate that
blocks new use of revoked authority and delays return until earlier in-flight work completes. In a fixed-set Kafka~4.3.1 test deployment, this closes the studied synchronous, nontransactional write path without blocking unrelated requests. For a grant, authorization ends only when issuance stops and no
earlier authorization can reach an effect the application rejects.

\end{abstract}

\section{Introduction}

\emph{When can a provider boundary truthfully report that previously authorized
work can no longer cause an effect the application rejects?}
We trace six paths from authorization to application effect in GitHub,
Kubernetes, NATS, and Kafka. Consider an application using GitHub's Model
Context Protocol (MCP) server to merge pull request~42. The application approves
$h_1$ and calls \texttt{merge\_pull\_request}, but the head may move to $h_2$
before merge. GitHub's native API~\cite{githubmerge} can reject the merge if the
head no longer matches the reviewed commit, whereas the tested MCP
tool~\cite{githubmcp} exposes no equivalent check. Reads may all return $h_1$,
yet the head can change after the last one. Refusing or reading forever blocks
required work; proceeding without a condition may merge $h_2$. Thus, a component
limited to this MCP interface cannot guarantee both safety and completion,
although both interfaces behave as documented. The MCP interface reaches the
effect but cannot bind it to the reviewed commit.

We found a different failure in Kafka: a model can end too early. Our
initial model allowed return after every broker in the declared fixed set
applied the access control list (ACL) deletion~\cite{kafkakip801,kafkadeleteacls}. We
paused a request after authorization, yet it appended after return: authorizer
convergence was real, but the model stopped before the effect frontier at
append. We therefore designed Effect Fence: it blocks new use of revoked
authority and delays return until earlier in-flight work completes, extending
closure to append.
\emph{A correct proof can stop before the effect frontier}: verification is only
as meaningful as the effect path it reaches.

We model both failures along
$g\rightarrow a\rightarrow c\rightarrow F\rightarrow e$: grant lineage $g$
issues authorization instance $a$; continuation $c$ carries it; and effect $e$
follows the last preventable point $F$, the \emph{effect frontier}. The chain
distinguishes future-use closure, instance-effect closure, and quiescence (all
work stopping). We argue that authorization ends for a lineage only when no new
instance can be issued and no previously issued instance retains a path to a
rejected effect. We call the second condition \emph{policy-relative effect
closure}, or \emph{effect closure} for short. It is distinct from local
completion and quiescence: safe work may continue.

This is not only a check/use problem. In
NATS~\cite{natsdelivery,natssource} and Kafka, authorized work can still cause
effects after a local completion signal; Kubernetes~\cite{k8sadmission} adds
post-authorization mutation and already-admitted requests. The six paths span
late-bound names, post-check state changes, admitted execution, repeated
delivery, callbacks, and distributed in-flight work. Each shows how issued
authority can still cause an effect. We selected them for structural diversity rather than to estimate prevalence.

Prior work studies revocation and ongoing use, check/use and authorization--execution divergence, and enforcement over supplied interfaces
or models \cite{gligor1979revocation,uconabc2004,bishopdilger1996,mindthegap2025,
  authorizationexecutiongap2026,schneider2000}.
Commit-time authorization~\cite{cta2026} rechecks authority at an exposed commit
point.

In contrast, we present \system, which uses a finite contract constructed from provider evidence to decide
whether the provider boundary reaches the effect frontier and exposes enough
control to close rejected paths while required work completes. It returns a
strategy, a losing certificate, or \textsc{Unsupported} when evidence is
insufficient.
We contribute policy-relative effect-closure
semantics, frontier adequacy, and boundary realizability. Established
supervisory-control machinery~\cite{ramadgewonham1987,linwonham1988} decides the
finite realizability problem.
 Closure fails when a needed control is absent, local state hides active work, or the model stops before the frontier. Bind fixes the checked identity, Re-enter rechecks authority, and Fence waits at the effect boundary.

The NATS purge base and three variants exercise the closure checker. They hold
the provider path, policy, and required closure claim fixed while varying the
boundary assumption or control: native completion is treated as terminal, the
effect is bound to the authorized delivery, or dispatched callbacks are
awaited. Kafka provides the in-depth systems study: we implement a gate, test
crashes and contention, and measure whether it preserves unrelated work under
the declared policy.

  This paper makes three contributions:
  \begin{itemize}
    \item \textbf{Policy-relative effect-closure semantics.} We characterize
    authorization completion from grant to application effect through
    $g\to a\to c\to F\to e$. We define future-use, instance-effect, and lineage
    closure, prove their strict separation from quiescence, and recover complete
    revocation as lineage closure under a post-revocation policy. A closure claim returned on success names its kind, subject, policy, frontier, scope, and configuration.

    \item \textbf{Frontier adequacy and boundary realizability.}
    We require coverage from model boundary $B$ to effect frontier $F$, then ask
    whether the interface can report closure without blocking required work. For
    finite contracts, this is a control problem with hidden state. If no strategy works, a losing certificate rules out all strategies over that interface; its simplest form, an authorization twin, pairs worlds where the same observation requires incompatible choices.
    Machine-checked proofs establish the reduction and checker soundness. The
    checker validates 17 published certificates and derives every closure result
    for the four NATS contracts above. The checked results are conditional on the paths and continuations represented in each contract; they do not establish that every provider path was modeled.

  \item \textbf{Systems evidence.} Across six structural paths in GitHub,
  Kubernetes, NATS, and Kafka, we find missing controls, hidden active work, and
  premature model boundaries. We implement Effect Fence for the studied Kafka
  path and measure its cost in the test deployment.
  \end{itemize}

\section{Effect Closure at Provider Boundaries}
\label{sec:problem}
The \emph{mediator} observes events exposed by the provider and chooses among its exposed controls. Along a provider-to-effect path, these observations and
controls form the \emph{provider boundary}.
Late binding, transformation, redelivery, and execution share the path
$g\rightarrow a\rightarrow c\rightarrow F\rightarrow e$:
grant lineage $g$ issues authorization instance $a$, continuation $c$ carries it
across effect frontier $F$, and effect $e$ follows.
Instance $a$ represents authority for a principal, request, or work item;
$\mathsf{Permitted}(a,e)$ records which effects it authorizes. Causation alone
does not mean that $c$ carries $a$: evidence must link $c$'s creation or
continuation to $a$.
The contract must also represent effects that $c$ may cause outside
  $\mathsf{Permitted}(a,\cdot)$.
We follow this chain to the application's last chance to prevent a rejected
effect. Effect closure asks whether issued authority still has a path to such an
effect. Boundary realizability asks whether the exposed interface can remove
every such path while completing the required workload.

\subsection{From Calls to Policy-Relevant Effects}

\noindent\textbf{Provider paths and policy.}
A call at the provider boundary may continue internally to an effect. Interface
model $I$ specifies the mediator's observations and controls. Omitting an
exposed control can make realizability seem impossible; omitting a provider path
can make it seem possible.

An effect $e\in\mathcal E$ is a record of an operation and any policy-relevant
target, version, payload, principal, generation, or recipients. For an execution prefix $\rho_{\leq t}$, projection
$\eta(\rho_{\leq t})\in\mathcal E^*$ extracts its effect history. The application policy $\varphi_{\rm app}:\mathcal E^*\rightarrow\{0,1\}$ accepts or rejects such histories. Let $\mathsf{AuthCrossings}(\rho)$ be the pairs $(a,e)$ for which a continuation carrying $a$ produces $e$ in $\rho$. We
define prefix safety as
\[
\begin{aligned}
\mathsf{SafePrefix}(\rho)\equiv{}&
  \bigl(\varphi_{\rm app}(\eta(\rho))=1\bigr)\land{}\\[-1mm]
&\forall(a,e)\in\mathsf{AuthCrossings}(\rho),\\[-1mm]
&\hspace{17mm}\mathsf{Permitted}(a,e).
\end{aligned}
\]
For one effect $e$ reached by a continuation carrying $a$ after history $H$, the
corresponding one-step condition is
\[
\begin{aligned}
\mathsf{AuthorizedSafe}(a,H,e)\equiv{}&\mathsf{Permitted}(a,e)\\[-1mm]
&{}\land(\varphi_{\rm app}(H\cdot e)=1).
\end{aligned}
\]
The contract records the authorization instance carried by each continuation.
We call $e$ \emph{rejected} after $H$ if it lies outside that authorization or
$\varphi_{\rm app}$ rejects $H\cdot e$.

\noindent\textbf{The effect frontier.}
For effect $e$, its \emph{effect frontier} $F_e$ is the transition---or, in a
distributed execution, the cut---after which that effect can no longer be
prevented. It may precede return, declared completion, or visibility. Projection $\eta$
records $e$ at the crossing. A later policy-relevant event is a
separate effect with its own frontier. We write $F$ when the effect is clear.

The finite contract may stop at a proposed model boundary $B$. We write
$\mathsf{Adequate}_{\kappa}(B,F)$ when, within its declared scope $\kappa$, it represents
every path from $B$ by which issued authority can reach effect frontier $F$,
including every intervening preventive choice. Evidence must identify the last such choice
on every represented path, rule out a later one, and track the same effect
across model, checker, and trace. These checks can reject a proposed boundary but cannot
establish provider-wide path completeness. A later choice moves $F$; an omitted
in-scope path invalidates adequacy and requires extending the contract, as in
Kafka.

Because safety is checked at every prefix, later abort, compensation, or
reconciliation cannot erase a violation; prevention must act before $F$.

\noindent\textbf{Open continuations.}
An \emph{authorization-bearing continuation}, or carrier, is any state or
computation carrying issued instance $a$ toward an effect. Resolution,
transformation, queues, retries, callbacks, and in-flight execution create such
carriers. A carrier is \emph{open} if it can still reach a rejected effect before
the mediator has another chance to prevent it. \emph{Bind} preserves identity before later
resolution; \emph{Re-enter} rechecks authority after the last policy-relevant
change; and \emph{Fence} restricts or drains work already past that check. These
placements can be implemented in different ways. Updating future
authorization state alone does not close work already authorized.

\noindent\textbf{Observation alone is insufficient.}
Identical observations can require incompatible choices for safety and required
completion. An impossibility proof must include every exposed observation and
control: an omitted control might satisfy both requirements without
distinguishing the executions. We call such a pair an \emph{authorization twin}.
Figure~\ref{fig:boundary} shows the GitHub case: repeated reads cannot rule out
a later head change, and the base interface lacks Bind.

\begin{figure}[t]
\centering
\begin{tikzpicture}[
  font=\sffamily,
  same/.style={draw=black!65, rounded corners=2pt, fill=gray!8,
    text width=0.52\columnwidth, minimum height=7.3mm, align=center,
    inner sep=2pt, font=\scriptsize\sffamily},
  provider/.style={draw=black!60, rounded corners=2pt,
    text width=0.27\columnwidth, minimum height=7.8mm, align=center,
    inner sep=1.7pt, font=\scriptsize\sffamily},
  effect/.style={draw=black!60, rounded corners=2pt,
    text width=0.27\columnwidth, minimum height=7.8mm, align=center,
    inner sep=1.7pt, font=\scriptsize\sffamily},
  flow/.style={-{Triangle[length=1.8mm,width=1.5mm]}, line width=0.65pt},
  note/.style={draw=black!60, rounded corners=2pt, fill=black!2,
    text width=0.72\columnwidth, minimum height=7mm, align=center,
    inner sep=2pt, font=\scriptsize\sffamily}
]
\node[same] (view) {\textbf{same result on every completed read}\\
  head $=h_1$; merge remains unconditioned};
\node[provider, fill=black!2, below=5.4mm of view, xshift=-0.185\columnwidth] (pa)
  {\textbf{world A}\\required\\head stays $h_1$ through merge};
\node[provider, fill=black!6, below=5.4mm of view, xshift=0.185\columnwidth] (pu)
  {\textbf{world U}\\must not merge\\head moves to $h_2$ after final read};
\node[effect, fill=black!2, below=2.9mm of pa] (ea)
  {\textbf{actual effect}\\merge $h_1$\\\textsc{Allowed}};
\node[effect, fill=black!9, draw=black!78, below=2.9mm of pu] (eu)
  {\textbf{actual effect}\\merge $h_2$\\\textsc{Forbidden}};
\coordinate (fork) at ($(view.south)+(0,-2.1mm)$);
\draw[line width=0.65pt, black!55] (view.south) -- (fork);
\draw[flow, black!65] (fork) -| (pa.north);
\draw[flow, black!75, densely dashed] (fork) -| (pu.north);
\draw[flow, black!65] (pa.south) -- (ea.north);
\draw[flow, black!75, densely dashed] (pu.south) -- (eu.north);
\node[note, below=3.4mm of $(ea.south)!0.5!(eu.south)$] (bind)
  {\textbf{Bind (atomic binding):} \texttt{sha=$h_1$} \mbox{completes A}; U fails
   before any merge.};
\end{tikzpicture}
\caption{An authorization twin: both worlds return $h_1$ through the final read
but require incompatible choices for safety and completion; U may then move to
$h_2$. Repeated reading cannot complete A, whereas Bind supplies the reviewed SHA to the
native API's atomic check.}
\Description{Every completed read returns h1 in both worlds. In required world
A, the head stays h1 through merge. In unsafe world U, the head moves to
h2 after the mediator's final read, so an unconditioned merge is forbidden.
Reading forever blocks required completion; binding execution to h1 repairs the
boundary.}
\label{fig:boundary}
\end{figure}

\subsection{Future-Use Closure, Effect Closure, and Quiescence}

An information state $q$ contains every contract-admitted history consistent
with the mediator's observations. Future-use closure holds at $q$ if no later
in-scope attempt matching $g$ can issue another instance. Instance-effect closure
means that issued instance $a$ has no open carrier; instance quiescence means
that no carrier of $a$ can cross $F$ before the mediator can act again.
Figure~\ref{fig:closure-dimensions} compares the three conditions;
Section~\ref{sec:formal} defines them over finite analysis contracts.
Kafka shows that future-use closure can coexist with an open carrier of an
already-issued instance $a$ from grant lineage $g$:
\begin{equation*}
\mathsf{FutureUseClosed}(g,q)
\not\Rightarrow
\mathsf{EffectClosed}_{\varphi_{\rm app}}(a,q).
\end{equation*}
After every authorizer in the fixed broker set has applied the ACL deletion, an
earlier request can still append. Instance-effect closure likewise does not
guarantee quiescence:
\begin{equation*}
\mathsf{EffectClosed}_{\varphi_{\rm app}}(a,q)
\not\Rightarrow
\mathsf{Quiescent}(a,q).
\end{equation*}
Bind may leave the merge running because its SHA condition permits
only the authorized, policy-accepted outcome. Commit-time
authorization~\cite{cta2026} can implement Re-enter at commit. Thus quiescence
guarantees instance-effect closure but is unnecessary: both allow a carrier to
cross $F$ safely.

\begin{figure}[t]
\centering
\begin{tikzpicture}[
  font=\scriptsize\sffamily,
  event/.style={draw=black!55, rounded corners=1.5pt, fill=gray!5,
    align=center, minimum height=8.2mm, text width=0.185\columnwidth,
    inner sep=2.2pt},
  flow/.style={-{Triangle[length=1.7mm,width=1.35mm]}, semithick,
    draw=black!65},
  condition/.style={draw=black!65, rounded corners=1.5pt, fill=black!3,
    align=center, text width=0.235\columnwidth, minimum height=10.5mm,
    inner sep=2.2pt},
  note/.style={font=\scriptsize\sffamily, align=center,
    text width=0.94\columnwidth},
  frontier/.style={densely dashed, draw=black!75, line width=0.75pt}
]
\path[use as bounding box] (-1.00,1.10) rectangle (7.15,-2.65);
\node[event] (grant) at (-0.10,0) {grant lineage\\$g$};
\node[event] (auth) at (1.95,0) {authorization\\instance $a$};
\node[event] (cont) at (4.00,0) {authorization-bearing\\continuation $c$};
\node[event, fill=black!8, draw=black!70] (effect) at (6.25,0)
  {\mbox{policy-relevant}\\effect $e$};
\coordinate (effectfrontier) at ($(cont.east)!0.50!(effect.west)$);
\draw[frontier] ([yshift=-3.4mm]effectfrontier) --
  ([yshift=7.0mm]effectfrontier);
\node[font=\scriptsize\sffamily, text=black!80, anchor=south]
  at ([yshift=7.2mm]effectfrontier) {effect frontier $F$};
\node[font=\scriptsize\sffamily, text=black!80, anchor=north]
  at ([yshift=-3.6mm]effectfrontier) {$\eta$ records $e$};
\draw[flow] (grant) -- (auth);
\draw[flow] (auth) -- (cont);
\draw[flow] (cont) -- (effect);

\node[condition] (t1) at (0.55,-1.38)
  {\textbf{Future-use}\\[-0.2mm]\textbf{closure}\\lineage $g$ at $t_1$};
\node[condition] (t2) at (3.075,-1.38)
  {\textbf{Instance-effect}\\[-0.2mm]\textbf{closure}\\instance $a$ at $t_2$};
\node[condition] (t3) at (5.60,-1.38)
  {\textbf{Instance}\\[-0.2mm]\textbf{quiescence}\\instance $a$ at $t_3$};
\node[note] at (3.075,-2.38)
  {\emph{Possible timings:} Kafka may have
   $t_1<t_2\leq t_3$; Bind or \mbox{Re-enter} may have $t_2<t_3$ while $c$
   can still cross $F$ safely.};
\end{tikzpicture}
\caption{Future-use closure, instance-effect closure, and quiescence need not
coincide. Projection $\eta$ records effect $e$ at $F$; provider completion is
separate and may occur before or after that crossing.}
\Description{Grant lineage g issues authorization instance a, which the authorization-bearing
continuation c carries toward a policy-relevant effect across frontier F. The
conditions are distinct: quiescence implies instance-effect closure, but
future-use closure need not imply instance-effect closure, and instance-effect
closure need not imply quiescence. Example timings show Kafka with t1 before t2
at or before t3, and Bind or Re-enter with effect closure before quiescence while
continuation c can still cross F safely.}
\label{fig:closure-dimensions}
\end{figure}

\subsection{Security Objective and Claim Boundary}

A closure claim must state its subject and kind: revocation may close
future use and already-issued instances; an operation may close one instance;
authorization acceptance may promise neither. None by itself asserts quiescence.

\noindent\textbf{Covered worlds, required worlds, and public claims.}
A world is a bounded, in-scope scenario: a request, authorization choices, a
declared response, and a completion goal. $W_P$ contains the covered worlds,
each required to reach a declared response within the bound.
$W_A\subseteq W_P$ marks those in which authorized work must also succeed, and
$\Xi(w)$ lists the public claims due then. Scope $\kappa$ records
the provider and interface version,
policy, workload, effect frontier, permitted hooks, credentials and controls,
evidence, and exclusions. Configuration version $\beta$ identifies one
installed mediator strategy $M$. Every
closure claim carries $\beta$ because enforcement depends on that configuration.
Write $\mathbf W=(W_P,W_A,\Xi)$ for these worlds and claim requirements.

\begin{effectcontract}
The analysis contract defines three public claim kinds, each scoped to
$(\kappa,\beta)$.

\noindent\textbf{Future use.}
$\mathsf{FutureUseClosed}(g)@(\kappa,\beta)$: under $\beta$, no admitted future
attempt matching $g$ can issue a derived instance.

\noindent\textbf{Issued instance.}
$\mathsf{InstanceEffectClosed}(a)@(\kappa,\beta)$ states that, under $\beta$, no
in-scope carrier of $a$ admitted by the contract can cross $F$ to an effect that
fails $\mathsf{AuthorizedSafe}$.

\noindent\textbf{Lineage.}
$\mathsf{LineageClosed}(g)@(\kappa,\beta)$ combines future-use closure with
effect closure under $\beta$ for every in-scope instance issued from $g$.

\noindent\textbf{Completion and evidence.}
$\Xi(w)$ lists required claims. Every emitted claim is tied to $(\kappa,\beta)$.
Interface model $I$ lists observations and controls; versioned evidence supports
provider premises and exclusions.
\end{effectcontract}

Let $P$ be contract-admitted provider behavior and $M$ a mediator.
$\mathsf{Runs}_P(w,M)$ contains every admitted run in $w$, and
$\mathsf{Runs}_{P,W_P}(M)$ their union over $W_P$.
Write $\mathsf{ClaimsDone}_{\Xi}(w,\rho)$ when, for every
$\bar\chi\in\Xi(w)$, $\rho$ emits a claim $\chi$ satisfying
$\mathsf{Match}_{\kappa,\beta}(\bar\chi,\chi)$; Section~\ref{sec:formal} also
checks the claim's truth.
$\mathsf{RequiredDone}_{\Xi}(w,\rho)$ holds when the world's declared completion
goal has been reached and those claims have been emitted. Then
\begin{align*}
\mathsf{Completes}_P(w,M)
  &\iff \forall\rho\in\mathsf{Runs}_P(w,M):\\[-1mm]
  &\qquad \Diamond\mathsf{RequiredDone}_{\Xi}(w,\rho).
\end{align*}
Appendix~\ref{app:finite-detail} defines exact matching. The schedule bound
limits how long an admitted run may take to reach its declared response and
enter the checked post-settlement invariant; it does not limit a claim's lifetime.
Write $\contractsettled_{W_P}(M)$ for this property. Machine checking verifies
entry, preservation, policy safety, and claim safety; evidence must cover all
in-scope successors. Settlement requires neither carrier nor provider quiescence.

\noindent\textbf{Admitted behavior.}
Provider and environment may choose any transition, retry, interleaving, or
delay allowed by $P$ and $\mathbf W$. The guarantee covers malicious or benign
callers and every admitted run, but not behavior outside the declared contract.

Every analyzed contract has $W_A\neq\varnothing$. Safety alone permits a useless
deny-all mediator, but required completion does not. We require both
\begin{equation}
\begin{aligned}
\mathsf{PolicySafe}_{W_P}(M):\quad
&\forall\rho\in\mathsf{Runs}_{P,W_P}(M),\\[-1mm]
&\forall t\leq |\rho|:\ \mathsf{SafePrefix}(\rho_{\leq t}),
\end{aligned}
\end{equation}
and
\begin{equation}
\begin{aligned}
&\requiredcomplete_{W_A}(M):\\[-1mm]
&\quad \forall w\in W_A:\ \mathsf{Completes}_P(w,M).
\end{aligned}
\end{equation}
Together with truthful claims and contract settlement, they exclude forbidden
prefixes, false closure returns, a deny-all mediator, and covered worlds that
never return.

\noindent\textbf{Security boundary and trusted computing base.}
Correctness assumes every covered invocation passes through the mediator and
interface $I$ with no alternate credential or channel, repairs follow modeled
ordering, and nondeterministic provider execution conforms to the finite
analysis contract.
Policy and workload are inputs; correctness does not rely on a safe caller.

\noindent\textbf{Scope and exclusions.}
A claim covers only contractual scope $\kappa$ of the finite analysis contract
$\langle P,I,\varphi_{\rm app},\mathbf W\rangle$. It excludes Byzantine behavior,
enforcement compromise, unmodeled channels, and availability beyond $\mathbf W$.
If evidence does not support $P$ or $I$, \system returns \textsc{Unsupported};
no certificate extends this scope to the provider as a whole.

\section{Provider-Boundary Realizability}
\label{sec:formal}

Starting from the finite, bounded analysis contract of
Section~\ref{sec:problem}, assumed exact within its stated scope, we reduce its
closure objective to a partially observed discrete-event system (DES): mediator
actions are controllable, provider and environment moves are not, and $I$
determines what the mediator observes~\cite{ramadgewonham1987,linwonham1988}.

The reduction determines whether a mediator using only observations in $I$ can
keep policy and claims safe, reach the checked post-settlement invariant, and
complete required work. A winning certificate gives such a strategy whereas a losing certificate rules out all such strategies in the finite contract. Repairs may
neither block required work nor change claims, workload, or effect semantics.

\subsection{Finite Boundary Model}

We model a provider and its exposed interface as
\[
\begin{gathered}
P=\langle S,\mathsf{Init}_P,\mathcal A,\delta,\lambda,\gamma,\eta\rangle,
\quad \mathcal A=A_R\uplus A_H,\\[-1mm]
I=\langle\pi_I,\Gamma_I\rangle,\\[-1mm]
\mathcal C=\langle P,I,\varphi_{\rm app},\mathbf W\rangle,
\quad \mathbf W=(W_P,W_A,\Xi).
\end{gathered}
\]
Here $S$ is hidden state, $\mathsf{Init}_P(w)$ gives world $w$'s admitted initial
states, and $\mathcal A=A_R\uplus A_H$ separates mediator controls from hidden
provider or environment moves. Relation $\delta$ gives transitions and
$\lambda$ labels them. The mediator observes $\pi_I\!\circ\lambda$ and chooses
an allowed control from $\Gamma_I$; abort and cancellation have explicit outcomes. Map $\gamma$
marks frontier crossings, $\eta$ projects effect histories, and $\mathsf{SafePrefix}$
also checks whether each effect is permitted by the authorization instance
carried by the continuation that produced it. Workload $\mathbf W$ gives covered
worlds, required-success worlds, and claim templates, while cope $\kappa$ records where they apply.

The model makes control and uncertainty explicit but does not infer provider
routes or enlarge $\kappa$. The verdict remains within the
contract's scope.

\noindent\textbf{Information states.}
For observed history $o$, information state $q=q_I(o)$ contains every
contract-admitted history consistent with $o$. Write $\Gamma(q)$ for the controls
shared by every history in $q$, and $q_0$ for the initial state. After the
mediator chooses $u\in\Gamma(q)$, hidden moves continue until the mediator can
act again or the bounded run ends. $\mathsf{Out}(q,u)$ contains every
admitted endpoint. The construction separately records whether an intermediate
prefix violates safety.
Appendix~\ref{app:finite-detail} defines these segments, their coverage, and the
exact grouping of histories into information states.

The provider may therefore choose the worst admitted outcome: a control is safe
only if every admitted outcome is safe. Bounds restrict the contract being
analyzed; they never justify discarding an admitted outcome within it.

Exactness is two-sided: omitting hidden behavior can create a false winner, while
omitting an exposed control can create a false impossibility. If evidence cannot
justify both directions, the analysis returns \textsc{Unsupported} before solving.

\subsection{Interpreting Typed Closure Claims}

\noindent\textbf{Provenance and closure.}
We now interpret Section~\ref{sec:problem}'s public claims over information
states. For the path $g\rightarrow a\rightarrow c\rightarrow F\rightarrow e$,
$\mathsf{DerivedFrom}(a,g)$ says that $a$ may derive from $g$;
$\mathsf{IssuedFrom}(a,g,q)$ also requires $a$ to appear in $q$'s issuance record; and
$\mathsf{Carries}(c,a,H,q)$ says live continuation $c$ carries $a$ with effect
history $H$. Each represented link must cite evidence, and a carrier link must connect
$c$'s creation or continuation to $a$ and record what $a$ authorized.
Physical outcomes are recorded
separately and may fall outside $\mathsf{Permitted}(a,\cdot)$. Excluding such
outcomes would remove the GitHub moved-head path.

Future-use closure covers every in-scope future attempt matched to $g$ rather than only
choices made by the installed mediator. Instance-effect closure covers the histories in
$q$ and their live carriers: before the mediator can act again, no carrier may
cross $F$ to an effect that fails $\mathsf{AuthorizedSafe}$. Carriers may remain
if every such crossing is safe;
quiescence excludes every carrier that can still cross $F$.
Appendix~\ref{app:finite-detail} defines the predicates;
$\mathsf{EffectClosed}_{\varphi_{\rm app}}(a,q)$ interprets the public
$\mathsf{InstanceEffectClosed}(a)$ claim.

At fixed scope and configuration, a public lineage claim is true at $q$
exactly when
\begin{equation}
\begin{aligned}
&\mathsf{LineageClosed}_{\varphi_{\rm app}}(g,q)\\[-1mm]
&\quad\iff\mathsf{FutureUseClosed}(g,q)\land{}\\[-1mm]
&\qquad\forall a:\ \mathsf{IssuedFrom}(a,g,q)\Rightarrow{}\\[-1mm]
&\hspace{26mm}\mathsf{EffectClosed}_{\varphi_{\rm app}}(a,q).
\end{aligned}
\label{eq:lineage-closure}
\end{equation}
$\mathsf{LineageEffectClosed}_{\varphi_{\rm app}}(g,q)$ abbreviates the
condition that every instance issued from $g$ is effect-closed.

\noindent\textbf{Claim scope and configuration.}
A claim's subject and kind determine its meaning. Future-use closure covers later
issuance, instance-effect closure covers already-issued instances, and lineage
closure combines both. None of these requires quiescence. Instance-effect and lineage
closure allow carriers whose reachable effects satisfy $\mathsf{AuthorizedSafe}$.

Each emitted claim instantiates its template with $(\kappa,\beta)$, where $\beta$
identifies installed strategy $M$, and must match exactly. The claim remains tied
to that configuration after return. Because the model retains earlier issuance
and carriers, switching to $(\beta',M')$ must preserve every prior claim. Each
checked contract covers one $\beta$, but preserving claims across configuration
changes is a separate deployment obligation.

\begin{proposition}[Closure separations]
\label{prop:closure-dimensions}
Over finite well-formed provider-boundary contracts,
$\mathsf{Quiescent}(a,q)\Rightarrow
\mathsf{EffectClosed}_{\varphi_{\rm app}}(a,q)$, whereas
\begin{equation}
\begin{gathered}
\mathsf{FutureUseClosed}(g,q)
\not\Rightarrow\mathsf{LineageEffectClosed}_{\varphi_{\rm app}}(g,q),\\[-1mm]
\mathsf{LineageEffectClosed}_{\varphi_{\rm app}}(g,q)
\not\Rightarrow\mathsf{FutureUseClosed}(g,q),\\[-1mm]
\mathsf{EffectClosed}_{\varphi_{\rm app}}(a,q)
\not\Rightarrow\mathsf{Quiescent}(a,q).
\end{gathered}
\label{eq:closure-dimensions}
\end{equation}
\end{proposition}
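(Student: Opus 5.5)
The implication follows by unfolding the two predicates as the paper defines them over information states; the three non-implications are shown by small finite contracts. For the implication, fix $a$ and $q$ and suppose $\mathsf{Quiescent}(a,q)$. Then no live carrier $c$ with $\mathsf{Carries}(c,a,H,q)$ can cross $F$ before the mediator acts again. $\mathsf{EffectClosed}_{\varphi_{\rm app}}(a,q)$ says that no such carrier crosses $F$ to an effect $e$ with $\neg\mathsf{AuthorizedSafe}(a,H,e)$. Every such offending crossing is in particular a crossing, so the quiescence hypothesis rules it out vacuously. The only care needed is that both predicates range over the same histories in $q$ and the same hidden segments before the mediator's next action. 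I will check this against the appendix definitions: both use $\mathsf{Out}(q,u)$ segments and $\gamma$-marked crossings, so the effect-closure condition is a restriction of the quiescence condition.

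For the separations I will build three finite well-formed contracts, each with a single world and a trivial $W_A$. Well-formedness imposes no completion obstruction here, because these are properties of an information state, not realizability claims.

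\textbf{First separation.} This follows the Kafka pattern. Take lineage $g$ whose grant has been revoked, so no admitted future attempt matching $g$ can issue, giving $\mathsf{FutureUseClosed}(g,q)$. Let $q$ still contain an issued instance $a$ with $\mathsf{IssuedFrom}(a,g,q)$ and a live carrier $c$. In $c$, the hidden move append crosses $F$ to an effect $e$ that the post-revocation $\varphi_{\rm app}$ rejects, or that lies outside $\mathsf{Permitted}(a,\cdot)$. Then $\mathsf{EffectClosed}$ fails for $a$, so $\mathsf{LineageEffectClosed}(g,q)$ fails.

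\textbf{Second separation.} Take a state where every issued instance of $g$ has no carrier at all, or none is issued yet, so $\mathsf{LineageEffectClosed}$ holds. Let the grant still be active, so an admitted future attempt matching $g$ can issue a new instance; then $\mathsf{FutureUseClosed}(g,q)$ fails.

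\textbf{Third separation.} This follows the Bind pattern from GitHub. Let $a$ authorize merging $h_1$, and let a live carrier carry a SHA-conditioned merge. Its only admitted crossings of $F$ are merge-$h_1$, which satisfies $\mathsf{AuthorizedSafe}$, or a failure with no effect. Then $\mathsf{EffectClosed}(a,q)$ holds, while $\mathsf{Quiescent}(a,q)$ fails because the carrier can still cross $F$.

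I expect the main obstacle to be making the witnesses genuinely well-formed rather than the logic. Each carrier link must be represented with provenance, $\gamma$ and $\eta$ must record the crossing consistently, and the information state must be exactly the set of admitted histories consistent with the observation. I will keep every witness to a single hidden history, so $q$ is a singleton and $\Gamma(q)$ is trivially defined. I will then verify the remaining well-formedness conditions item by item against the appendix, in particular that $W_A\neq\varnothing$ is satisfiable by a world whose completion goal is reached by the declared response.
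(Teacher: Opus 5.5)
Your proposal is correct and follows the paper's route. The implication is immediate because $\mathsf{Open}_{\varphi_{\rm app}}(c,a,H,q)$ unfolds to $\mathsf{Carries}(c,a,H,q)$ plus a tail in the same $\mathsf{Tail}_I(c,a,H,q)$ that crosses $F$, so it entails $\mathsf{CanCross}_F(c,a,H,q)$; your three separations (a Kafka-style open carrier after future use is closed, an active grant with no open carriers, and a Bind-style carrier whose only crossings are authorized-safe) are the same small witnesses the paper cites and has Lean check, though note that $\mathsf{Tail}_I$ is taken under the installed strategy ($u=M_{\mathcal H_\beta}(p)$), so each witness should fix $M$ explicitly along with its ledger and live-store data.
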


Under a post-revocation application policy $\varphi_{\rm app}^{\rm rev}$ that
rejects every later use of revoked authority, lineage closure corresponds to
complete revocation in the classical model of migrated permissions~\cite{flask1999}.

\subsection{Deciding Realizability}

$\mathsf{PolicySafe}(q)$ requires every represented execution prefix to satisfy
$\mathsf{SafePrefix}$. $\mathsf{SuccessObserved}(\chi,q)$ records a visible
success response carrying claim $\chi$, and
$\mathcal I_{\varphi_{\rm app},\kappa}(\chi,q)$ gives its truth condition at $q$.
Visible closure claims are safe exactly when
\begin{equation}
\begin{aligned}
&\mathsf{ClaimSafe}(q)\\[-1mm]
&\quad\iff\forall\chi:\ \mathsf{SuccessObserved}(\chi,q)\Rightarrow\\[-1mm]
&\hspace{32mm}\mathcal I_{\varphi_{\rm app},\kappa}(\chi,q).
\end{aligned}
\label{eq:claim-safe}
\end{equation}
Write $\mathsf{PolicySafe}_{W_P}(M)$ for policy-safety invariance and
$\mathsf{ClaimsSafe}_{W_P}(M)$ for claim-safety invariance under $M$ from
$W_P$. Predicate $\contractsettled_{W_P}(M)$ requires every admitted run to
reach a declared response and enter a checked post-settlement invariant within
the bound. $\requiredcomplete_{W_A}(M)$ additionally requires each $W_A$ world
to reach its goal and emit all required claims. Contract settlement ends the
bounded wait for a response, and it does not require carrier or provider quiescence.

\begin{definition}[Provider-boundary realizability]
A contract $\mathcal C=\langle P,I,\varphi_{\rm app},\mathbf W\rangle$ is realizable iff
\begin{equation}
\begin{aligned}
&\realizable(P,I,\varphi_{\rm app},\mathbf W)\\[-1mm]
&\quad\iff\exists M:\quad
(\forall o,\ M(o)\in\Gamma_I(o))\\[-1mm]
&\hspace{17mm}\land\mathsf{PolicySafe}_{W_P}(M)\\[-1mm]
&\hspace{17mm}\land\mathsf{ClaimsSafe}_{W_P}(M)\\[-1mm]
&\hspace{17mm}\land\contractsettled_{W_P}(M)\\[-1mm]
&\hspace{17mm}\land\requiredcomplete_{W_A}(M).
\end{aligned}
\label{eq:realizability}
\end{equation}
\end{definition}

\noindent\textbf{Frontier adequacy.}
Boundary power matters only if the model reaches the effect. A positive verdict
at boundary $B$ may fail if an omitted path lets an authorization-bearing
continuation reach $F$; one carrier can introduce a rejected path. Kafka's
authorizer-convergence model stopped before append to the target leader's
local log. Extending it to $F_{\rm append}$ invalidates the verdict, so the
evidence check rejects the premature contract before solving.
$\mathsf{Adequate}_{\kappa}(B,F)$ remains an obligation within scope $\kappa$.

\noindent\textbf{Boundary insufficiency.}
Even at the effect frontier, the exposed observations and controls may be
insufficient. An \emph{authorization twin} is the boundary
version of the standard DES indistinguishability argument. Write
$\mathsf{Twin}_I(w_A,w_U)$ when, against every mediator limited to $I$, the
provider can keep required world $w_A$ and unsafe world $w_U$ observationally
equivalent until required completion becomes impossible in $w_A$ or a forbidden
prefix occurs in $w_U$.
\begin{corollary}[Authorization-twin obstruction]
If $w_A,w_U\in W_P$, $w_A\in W_A$, and
$\mathsf{Twin}_I(w_A,w_U)$, no mediator $M$ restricted to $I$ satisfies both
$\mathsf{PolicySafe}_{W_P}(M)$ and $\requiredcomplete_{W_A}(M)$.
\end{corollary}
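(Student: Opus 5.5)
The argument is by contradiction, and it unfolds the definition of $\mathsf{Twin}_I$ against a single fixed mediator. Suppose some $M$ with $M(o)\in\Gamma_I(o)$ for every $o$ satisfies both $\mathsf{PolicySafe}_{W_P}(M)$ and $\requiredcomplete_{W_A}(M)$. Since $\mathsf{Twin}_I(w_A,w_U)$ quantifies over every mediator limited to $I$, it applies to this $M$ in particular. It yields a provider (adversary) strategy that resolves the hidden moves $A_H$ in both worlds so that the two observed histories $\pi_I\circ\lambda$ coincide. This equivalence holds until one of two events occurs: required completion becomes impossible in $w_A$, or a forbidden prefix appears in $w_U$.

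The first key step is to show that, while the observations agree, $M$ issues identical controls in both worlds. This holds because $M$ is a function of the observed history $o$, or equivalently of the information state $q_I(o)$, which contains both histories. By induction on the length of the observed prefix, the runs $\rho_A\in\mathsf{Runs}_P(w_A,M)$ and $\rho_U\in\mathsf{Runs}_P(w_U,M)$ induced by the adversary carry the same observation sequence and the same mediator choices. Both runs are admitted, because the twin's provider choices are contract-admitted: the paper states that the provider and environment may choose any admitted transition or delay, and $\mathsf{Runs}$ contains every admitted run.

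The second key step is a case split on which terminating event the twin guarantees.

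\emph{Case 1: completion becomes impossible in $w_A$.} Then $\rho_A$, or every admitted extension of it that the adversary keeps choosing, never reaches $\mathsf{RequiredDone}_{\Xi}(w_A,\cdot)$. So $\mathsf{Completes}_P(w_A,M)$ fails. Since $w_A\in W_A$, this contradicts $\requiredcomplete_{W_A}(M)$.

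\emph{Case 2: a forbidden prefix occurs in $w_U$.} Then some $t$ has $\neg\mathsf{SafePrefix}(\rho_{U,\leq t})$. Since $w_U\in W_P$, we have $\rho_U\in\mathsf{Runs}_{P,W_P}(M)$, which contradicts $\mathsf{PolicySafe}_{W_P}(M)$.

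The twin predicate states that one of these events must occur; equivalence cannot persist forever without either. The reason is that the finite bounded contract forces every covered world to settle within the schedule bound. Hence no such $M$ exists.

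The main obstacle is making the phrase ``until completion becomes impossible'' precise enough for Case 1. Required completion is a liveness property quantified over all runs, so I must ensure the adversary can commit to a single admitted run, not merely a family of runs, that falsifies $\Diamond\mathsf{RequiredDone}$. My first attempt would use the finiteness and boundedness of the contract. Within the schedule bound, runs are finite, so ``impossible'' means that no admitted continuation of the current prefix, under $M$'s deterministic choices, reaches the goal. Any maximal admitted extension then serves as the witness.

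A secondary check is that $\mathsf{Twin}_I$ must include every exposed control in $\Gamma_I$. Otherwise the identical-choice induction fails: $M$ could pick an omitted control, such as Bind, that separates the two outcomes without distinguishing the observations. This is exactly why the corollary is restricted to mediators limited to $I$.
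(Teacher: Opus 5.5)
Your argument is correct, but it takes a different route from the paper. The paper argues inside the information-state game. It applies $\mathsf{Twin}_I(w_A,w_U)$ to the universal predecessor $\mathsf{Pre}$: an information state in which the two worlds stay merged lies outside $G$, and every available control has an admitted successor that keeps the worlds merged, is inadmissible, or can no longer complete $w_A$. Such states never enter $\mathcal W$, and the trap argument behind Theorem~\ref{thm:realizability} (``outside $\mathcal W$, every available action has a successor outside $\mathcal W$'') extends this to every observation-based mediator.

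You instead fix $M$. You use only that $M$ is a function of observed histories, which gives identical control sequences on the paired runs, and then refute $\mathsf{RequiredComplete}_{W_A}(M)$ or $\mathsf{PolicySafe}_{W_P}(M)$ with an explicit witness run. Your route is more elementary. It needs neither the exact information-state quotient nor the reduction theorem, and it yields exactly the stated two-conjunct conclusion. By contrast, $q_0\notin\mathcal W$ read through Theorem~\ref{thm:realizability} literally refutes only the full conjunction in $\mathsf{Realizable}$, because $G$ also requires $\mathsf{ClaimSafe}$ and $\mathsf{ContractSettled}$. The paper's route buys the link to the checker: the merged states form a losing set $L$ that meets the losing-certificate conditions. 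That is how a twin such as the GitHub case becomes a Lean-checkable certificate, and why a twin is the two-world special case of a losing certificate.

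Two small corrections. First, that one of the terminating events actually occurs follows from reading $\mathsf{Twin}_I$ as a strong until: the provider forces one of the events while keeping observations equal. It does not follow from the schedule bound. A bounded run could stay observationally equal to its end with $w_A$ completed and $w_U$ safe, and boundedness alone gives no contradiction there. Second, your identical-choice induction does not depend on which controls $\Gamma_I$ contains. The restriction to $I$ matters only because it is what lets the twin's quantifier over mediators apply to $M$.
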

The obstruction concerns boundary power rather than mediator complexity. Breaking the
twin requires different provider behavior or another observation or control. A
twin is sufficient but not necessary for unrealizability, because some losing
information states contain more than two worlds. Appendix~\ref{app:finite-detail}
treats randomization and gives the general certificate checks.

The GitHub case exposes one limit. Repeated head reads cannot permanently
distinguish the two worlds because the provider may move the head after the
final read.
Merging may then be unsafe, whereas aborting or reading indefinitely violates
required completion. Making the checked version an atomic precondition of the
effect breaks this indistinguishability. Once the full path is represented,
searching harder over the unchanged boundary cannot produce a valid solution.

\noindent\textbf{Finite reduction.}
With the claim and frontier fixed, $q\in\mathsf{Pre}(X)$ when some available
control keeps every admitted outcome in $X$. Call $q$ admissible when policy and claims are safe:
$\mathsf{Admissible}(q)\equiv\mathsf{PolicySafe}(q)\land\mathsf{ClaimSafe}(q)$.
Admissible means safe; admitted means included by the contract. Here
$\mathsf{ContractSettled}(q)$ means that every represented state has emitted
its declared response and belongs to the checked post-settlement invariant;
$\mathsf{RequiredDone}_{\Xi}(q)$ records each designated goal
and required template. Let
\begin{equation*}
\begin{aligned}
G=\{q\mid{}&\mathsf{Admissible}(q)\land\mathsf{ContractSettled}(q)\\[-1mm]
&{}\land\mathsf{RequiredDone}_{\Xi}(q)\}.
\end{aligned}
\end{equation*}
Here, $\mathsf{RequiredDone}_{\Xi}$ includes a matching observed claim for
every template required in each represented required world. The ranked least
fixed point for reaching $G$ is
\begin{equation}
\begin{aligned}
\mathsf{Pre}(X)=\{q\mid{}&\mathsf{Admissible}(q)\land
\exists u\in\Gamma(q):\\[-1mm]
&\varnothing\ne\mathsf{Out}(q,u)\subseteq X\},\\[-1mm]
\mathsf{Win}_0&=G,\qquad
\mathsf{Win}_{n+1}=G\cup\mathsf{Pre}(\mathsf{Win}_n),\\[-1mm]
\mathcal W&=\bigcup_n\mathsf{Win}_n.
\end{aligned}
\label{eq:winning}
\end{equation}
Set $\mathcal W$ is the \emph{winning region}: from every state in $\mathcal W$,
the mediator can force $G$. The subset condition covers every
admitted provider outcome; decreasing rank proves entry into the post-settlement
invariant and required completion. Lean~4~\cite{demouraullrich2021lean4}
checks that the admitted post-settlement successor relation preserves the invariant and
$\mathsf{Admissible}$. A settled state may retain work with no rejected effect path;
no quiescence is required.

\begin{theorem}[Finite boundary-realizability reduction]
\label{thm:realizability}
For a finite, bounded provider-boundary contract whose information-state construction
is well formed and exact, the construction above yields a finite partially observed DES control
instance with information-state winning region $\mathcal W$, and
\[
\realizable(P,I,\varphi_{\rm app},\mathbf W)\quad\Longleftrightarrow\quad q_0\in\mathcal W.
\]
\end{theorem}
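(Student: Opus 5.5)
The plan is to prove the two directions separately, after first establishing that the construction really is a finite partially observed DES control instance. Finiteness comes from the bounded contract. $S$, $\mathcal A$, and $W_P$ are finite, and the schedule bound caps run length, so there are finitely many admitted histories. Hence there are finitely many observation sequences $o$ and finitely many information states $q_I(o)$. Well-formedness and exactness of the information-state construction (Appendix~\ref{app:finite-detail}) give three facts. First, $\Gamma(q)$ is exactly the set of controls available after every history in $q$. Second, $\mathsf{Out}(q,u)$ is exactly the set of information states reachable by admitted hidden segments after $u$. Third, $\mathsf{PolicySafe}(q)$, $\mathsf{ClaimSafe}(q)$, $\mathsf{ContractSettled}(q)$ and $\mathsf{RequiredDone}_{\Xi}(q)$ hold iff the corresponding run-level predicates hold on every represented history, including intermediate prefixes recorded during the segments. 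This is the standard subset construction: mediator controls are controllable, $A_H$ moves are uncontrollable, and $\pi_I\circ\lambda$ is the observation map. Because the lattice of subsets of a finite set is finite, the iteration $\mathsf{Win}_n$ stabilizes, so $\mathcal W$ is computable and each $q\in\mathcal W$ has a least rank $r(q)$.

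\emph{($\Leftarrow$).} Assume $q_0\in\mathcal W$ and build the memoryless strategy $M$ on information states. At a state $q\in\mathcal W\setminus G$ of rank $n+1$, $M$ picks some $u\in\Gamma(q)$ with $\varnothing\neq\mathsf{Out}(q,u)\subseteq\mathsf{Win}_n$. At a state $q\in G$, $M$ follows the Lean-checked post-settlement successor relation. $M$ depends only on $o$ through $q_I(o)$, and $M(o)\in\Gamma_I(o)$ because $\Gamma(q)$ consists of controls shared by every history. By induction on rank, every admitted run from $q_0$ visits only admissible states and reaches $G$ within $r(q_0)$ mediator steps. The subset condition covers every provider choice, and nonemptiness rules out vacuous progress. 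After entry, the checked invariant preserves $\mathsf{Admissible}$. Exactness turns these information-state facts into run-level ones. $\mathsf{PolicySafe}$ at every information state on the path covers every prefix, including intermediate ones. $\mathsf{ClaimSafe}$ gives $\mathsf{ClaimsSafe}_{W_P}(M)$. Reaching $G$ yields $\contractsettled_{W_P}(M)$ and, since $\mathsf{RequiredDone}_\Xi(q)$ covers each represented required world, $\requiredcomplete_{W_A}(M)$.

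\emph{($\Rightarrow$).} Suppose some $M$ witnesses~\eqref{eq:realizability}. Observation-based strategies factor through information states, because two histories with the same $o$ receive the same control. Define $R$ as the set of information states reachable from $q_0$ under $M$. Every $q\in R$ is admissible by policy and claim safety. Settlement and required completion within the bound mean every branch of the finite tree of $M$-runs from $q$ reaches a state satisfying $G$. I would then show by induction on the maximal remaining height $h(q)$ of that tree that $q\in\mathsf{Win}_{h(q)}$. If $h(q)=0$ then $q\in G$. Otherwise $u=M(q)$ satisfies $\mathsf{Out}(q,u)\subseteq\mathsf{Win}_{h(q)-1}$ by the induction hypothesis. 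The set $\mathsf{Out}(q,u)$ is nonempty because the bounded run must continue to a declared response. Hence $q\in\mathsf{Pre}(\mathsf{Win}_{h(q)-1})$ and, in particular, $q_0\in\mathcal W$.

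I expect the main obstacle to be the exactness bridge in both directions. Information-state predicates must coincide with the run-level definitions: intermediate unsafe prefixes must be captured inside $\mathsf{Out}$ segments, and $\mathsf{RequiredDone}_\Xi(q)$ must mean completion in every required world consistent with $q$, not merely some. I would handle this first as a standalone lemma from the Appendix~\ref{app:finite-detail} segment semantics, stated as: for every history $h\in q$, $\mathsf{SafePrefix}$ holds on all prefixes of $h$ iff $\mathsf{PolicySafe}(q)$, and likewise for the other three predicates. A secondary subtlety is that completion in $W_A$ is required only for required worlds while settlement covers all of $W_P$. I would handle this by arguing that $G$ encodes both and that $\Diamond$ within the schedule bound gives a finite height.
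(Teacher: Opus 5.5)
Your $(\Leftarrow)$ direction is the paper's argument. The rank on $\mathcal W$ gives, at each non-goal state, an action whose admitted successors all have lower rank, so admissibility holds until $G$. After that, the Lean-checked post-settlement invariant carries policy and claim safety.

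For $(\Rightarrow)$ you take a different route. The paper argues contrapositively: a state outside $\mathcal W$ is either inadmissible, or every available action has a successor outside $\mathcal W$, because validated games give every action a nonempty admitted response. Against any observation-based strategy, the provider can therefore steer a run that stays outside $\mathcal W$, hence outside $G$. You instead fix a winning $M$ and push its reachable states into $\mathsf{Win}_h$ by well-founded induction on remaining height.

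Your version gives a quantitative bonus: the rank of $q_0$ is at most the worst-case number of mediator decisions any winning strategy needs. It also derives nonemptiness of the chosen action's response set from the settlement requirement rather than from game validation. The paper's trap argument needs no height function and no finite run tree, and it applies directly to history-dependent strategies.

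That last point is where your write-up needs a repair. You assert that observation-based strategies factor through information states ``because two histories with the same $o$ receive the same control.'' The premise does not give the conclusion, because $q_I$ is a quotient that may merge distinct observation histories; this is why the paper imposes $q_I(o_1)=q_I(o_2)\Rightarrow\Gamma_I(o_1)=\Gamma_I(o_2)$. A witness $M$ for $\realizable$ may choose differently at $o_1$ and $o_2$, so $M(q)$ and ``the tree of $M$-runs from $q$'' are not well defined.

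The fix:
\begin{itemize}
\item Index the induction by nodes $o$ of the $M$-tree and prove $q_I(o)\in\mathsf{Win}_{h(o)}$.
\item Use exactness (equal information states agree on successors) to identify the children of $o$ under $M(o)$ with $\mathsf{Out}(q_I(o),M(o))$.
\item Use monotonicity of $\mathsf{Win}_n$ in $n$ for the inclusion step.
\end{itemize}
With these changes your argument goes through.

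Separately, the schedule bound limits only the time to settlement, not post-settlement lifetime. Finiteness of the game should therefore come from the finite information-state data, not from there being finitely many histories.
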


Lean~4 machine-checks the equivalence above,
certificate soundness, and each translation from the machine-readable contract
to the game; the contract and its in-scope route inventory remain empirical.

\begin{corollary}[Success honors its closure claims]
\label{cor:continuation-closure}
Every state reachable under a winning strategy is admissible, so each observed
success satisfies every future-use, instance-effect, or lineage claim that it
declares. A certificate guarantees only the closure properties that the corresponding success response
advertises, and effect closure need not imply quiescence.
\end{corollary}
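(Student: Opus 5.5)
The argument derives Corollary~\ref{cor:continuation-closure} from Theorem~\ref{thm:realizability} and the shape of the fixed point~\eqref{eq:winning}. Suppose $q_0\in\mathcal W$. We take the strategy extracted from the ranked least fixed point. At a state $q\in\mathsf{Win}_{n+1}\setminus G$, it chooses a witness $u\in\Gamma(q)$ with $\varnothing\ne\mathsf{Out}(q,u)\subseteq\mathsf{Win}_n$. At a state of $G$, it follows the post-settlement successor relation that Lean checks.

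\textbf{Step 1: every reachable information state is admissible.} We induct along the run.
\begin{itemize}
\item \emph{Before settlement.} Each visited $q$ lies in some $\mathsf{Win}_{n+1}$, so it lies in $G$ or in $\mathsf{Pre}(\mathsf{Win}_n)$. Both sets require $\mathsf{Admissible}(q)$ by definition. The chosen control keeps every admitted outcome inside $\mathsf{Win}_n$, and there are no exceptions: the subset condition ranges over all of $\mathsf{Out}(q,u)$, so the provider's worst admitted choice is included.
\item \emph{After settlement.} Once the run is in $G$, the machine-checked preservation lemma says the post-settlement successor relation preserves $\mathsf{Admissible}$.
\item \emph{Intermediate prefixes.} Prefixes strictly inside a segment also need checking. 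Here we use the construction's separate record of intermediate safety violations (Appendix~\ref{app:finite-detail}). Under well-formedness and exactness, a segment whose intermediate prefix is unsafe makes its endpoint fail $\mathsf{PolicySafe}$, so such an endpoint cannot lie in $\mathsf{Win}_n$.
\end{itemize}
This last point is the step I expect to be the main obstacle. It requires the information-state construction to propagate intermediate prefix safety into endpoint admissibility, so I would first cite that clause of the well-formedness definition explicitly.

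\textbf{Step 2: claims hold at every reachable state.} Admissibility includes $\mathsf{ClaimSafe}(q)$. By~\eqref{eq:claim-safe}, every observed success carrying $\chi$ then satisfies $\mathcal I_{\varphi_{\rm app},\kappa}(\chi,q)$. We unfold this truth condition by claim kind:
\begin{itemize}
\item a future-use claim is interpreted as $\mathsf{FutureUseClosed}(g,q)$;
\item an instance claim is interpreted as $\mathsf{EffectClosed}_{\varphi_{\rm app}}(a,q)$;
\item a lineage claim is interpreted as the conjunction in~\eqref{eq:lineage-closure}.
\end{itemize}
Because $q$ contains every admitted history consistent with the observations, the claim holds whatever the hidden state. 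Claims persist after emission because the model retains issuance and carriers, so $\mathsf{SuccessObserved}$ stays true at later reachable states and is rechecked at each of them.

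\textbf{Step 3: the two limiting remarks.}
\begin{itemize}
\item \emph{Only advertised properties are guaranteed.} $\mathsf{ClaimSafe}$ quantifies only over claims $\chi$ actually observed. A success response advertising, say, only future use therefore yields nothing about instance closure. We illustrate this with the first separation of Proposition~\ref{prop:closure-dimensions}: $\mathsf{FutureUseClosed}$ holding while some issued instance stays open shows no stronger property follows.
\item \emph{Effect closure need not imply quiescence.} This follows directly from the third non-implication in Proposition~\ref{prop:closure-dimensions}. In addition, $G$ and the post-settlement invariant do not require quiescence, so a winning run may legitimately end with safe live carriers, as in the Bind witness.
\end{itemize}
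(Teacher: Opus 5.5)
Your proposal is correct and follows the paper's own argument, which is largely implicit. Reachable states are admissible because both $G$ and $\mathsf{Pre}$ require $\mathsf{Admissible}$ and the Lean-checked post-settlement invariant preserves it. The intermediate-prefix concern you flag is settled by the appendix clause that $\mathsf{PolicySafe}(s)$ also requires the stored cell label to record no unsafe intermediate prefix. Your remaining steps match the paper as well: $\mathsf{ClaimSafe}$ unfolds to the typed interpretations, the limiting remarks follow from $\mathsf{ClaimSafe}$ quantifying only over observed claims and from Proposition~\ref{prop:closure-dimensions}, and for an arbitrary winning strategy in the sense of~\eqref{eq:realizability} admissibility is immediate from $\mathsf{PolicySafe}_{W_P}(M)\land\mathsf{ClaimsSafe}_{W_P}(M)$.
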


\begin{figure*}[t]
\centering
\resizebox{0.99\textwidth}{!}{%
\begin{tikzpicture}[
  font=\sffamily,
  stage/.style={draw=black!60, rounded corners=2pt, align=center,
    text width=2.08cm, minimum height=12mm, inner sep=2pt,
    font=\scriptsize\sffamily},
  lower/.style={draw=black!60, rounded corners=2pt, align=center,
    text width=1.98cm, minimum height=10mm, inner sep=2pt,
    font=\scriptsize\sffamily},
  flow/.style={-{Triangle[length=1.8mm,width=1.5mm]}, line width=0.65pt},
  branch/.style={-{Triangle[length=1.6mm,width=1.3mm]}, dashed,
    draw=black!58, line width=0.55pt}
]
\node[stage, fill=gray!8] (provider)
  {\textbf{provider evidence}\\docs, source, traces};
\node[stage, fill=gray!5, anchor=north west] (contract)
  at ([xshift=4mm]provider.north east)
  {\textbf{grounded contract}\\machine checking starts here};
\node[stage, fill=gray!8, anchor=north west] (compiler)
  at ([xshift=7mm]contract.north east)
  {\textbf{Lean translation validator}\\checks proposed game};
\node[stage, fill=gray!5, anchor=north west] (artifact)
  at ([xshift=4mm]compiler.north east)
  {\textbf{untrusted solver}\\proposes certificate};
\node[stage, fill=gray!8, anchor=north west] (checker)
  at ([xshift=7mm]artifact.north east)
  {\textbf{Lean certificate checker}\\checks certificate};
\node[stage, fill=gray!5, anchor=north west] (claim)
  at ([xshift=4mm]checker.north east)
  {\textbf{checked finite verdict}\\scoped to the contract};
\draw[flow] (provider) -- (contract);
\draw[flow] (contract) -- (compiler);
\draw[flow] (compiler) -- (artifact);
\draw[flow] (artifact) -- (checker);
\draw[flow] (checker) -- (claim);

\coordinate (lowerline) at ([yshift=-25mm]provider.north);
\node[lower, fill=gray!10, anchor=north] (unsupported)
  at (contract.center |- lowerline)
  {\textbf{Unsupported}\\missing premise};
\draw[branch] (contract) -- (unsupported);

\node[lower, fill=gray!5, anchor=north] (basis)
  at (compiler.center |- lowerline)
  {fixed repair\\set};
\node[lower, fill=gray!8, anchor=north] (patched)
  at (artifact.center |- lowerline)
  {checked repaired\\contract $\mathcal C\oplus\Delta$};
\node[lower, fill=gray!5, anchor=north] (obligation)
  at (checker.center |- lowerline)
  {deployment\\obligation};
\node[lower, fill=gray!8, anchor=north] (trace)
  at (claim.center |- lowerline)
  {provider trace\\conformance};
\coordinate (lossbranch) at ([yshift=-5mm]claim.south);
\coordinate (lossleft) at (lossbranch -| basis.north);
\draw[branch] (claim.south) -- (lossbranch) --
  node[pos=0.15, below=0.5mm, inner sep=0pt,
    font=\scriptsize\sffamily] {if losing}
  (lossleft) -- (basis.north);
\draw[flow] (basis) -- (patched);
\draw[flow] (patched) -- (obligation);
\draw[flow] (obligation) -- (trace);

\begin{pgfonlayer}{background}
\node[draw=black!35, dashed, rounded corners=3pt,
  fit=(provider)(contract), inner xsep=1.5mm, inner ysep=2mm,
  label={[font=\scriptsize\sffamily\bfseries,yshift=1mm]above:empirical grounding}] {};
\node[draw=black!35, dashed, rounded corners=3pt,
  fit=(compiler)(artifact)(checker), inner xsep=1.5mm, inner ysep=2mm,
  label={[font=\scriptsize\sffamily\bfseries,yshift=1mm]above:mechanized decision}] {};
\node[draw=black!50, dashed, rounded corners=3pt,
  fit=(claim), inner xsep=1.5mm, inner ysep=2mm,
  label={[font=\scriptsize\sffamily\bfseries,yshift=1mm]above:scoped verdict}] {};
\node[draw=black!35, dashed, rounded corners=3pt,
  fit=(basis)(patched)(obligation)(trace), inner xsep=1.5mm, inner ysep=2mm,
  label={[font=\scriptsize\sffamily\bfseries,yshift=1mm]above:
    repair and deployment}] {};
\end{pgfonlayer}
\end{tikzpicture}%
}
\caption{The pipeline separates empirical grounding, machine checking, and the
scoped verdict. Evidence supports a finite contract or yields
\textsc{Unsupported}. Lean validates the proposed game and checks the untrusted
solver's certificate. A losing verdict may enter repair selection; deployment
obligations and trace conformance remain separate. Provider meaning and coverage
still come from the evidence.}
\Description{A left-to-right pipeline has three zones: empirical grounding,
mechanized decision, and a scoped verdict. Provider evidence yields either a
finite semantic contract or \textsc{Unsupported}. Lean validates the proposed game and
checks the solver's certificate. A losing verdict branches to candidate repair,
deployment obligations, and provider-trace conformance.}
\label{fig:pipeline}
\end{figure*}

\noindent\textbf{Checked translation and certificates.}
Following translation validation~\cite{pnueli1998translation}, an untrusted
compiler may propose a finite game. From the machine-readable contract, Lean
reconstructs its materialized safety labels, goals, actions, and successors,
rejecting duplicate or unreachable information states. A winning certificate
must give a safe ranked region with decreasing successors. For a losing set
$L$, the checker verifies $q_0\in L$, every $q\in L$ lies outside $G$, and, at
each admissible $q\in L$, every available action has an admitted successor in
$L$. Thus, neither compiler nor solver is trusted after the contract is fixed.

The certificates establish different facts. A winning certificate selects
actions that induce one strategy over $I$. That strategy preserves safety, makes
only true claims, and completes required work. A losing certificate gives, for every available action at each
admissible $q\in L$, an admitted successor in $L$, while a non-admissible state already
violates safety, so it needs no successor witness. It rules out every mediator
within the finite contract and omitted provider routes remain outside the result.

\subsection{Compatible Boundary Changes}

\noindent\textbf{Preserving external meaning.}
A losing verdict motivates a repair only if the changed boundary preserves the
same problem. Candidate $\Delta$ may change modeled provider behavior or the
interface, but not the policy or external specification. It yields $\mathcal C\oplus\Delta=
\langle P_\Delta,I_\Delta,\varphi_{\rm app},\mathbf W_\Delta\rangle$ and a mapping
$\iota_\Delta:W_P\to W_P^\Delta$.
$\mathsf{PreservesExternalContract}$ requires $\iota_\Delta$ to be a bijection
preserving the request, principal, initial authority, authorization choices,
completion goal, $W_A$ membership, and the claim template attached to each
response. In particular, $\Xi_\Delta(\iota_\Delta(w))=\Xi(w)$ for every $w\in W_A$.

A candidate remains within $\kappa$ only while it stays inside the recorded
contract boundary. It may refine internal state, install an eligible control or
new $\beta_\Delta$, and delay success within the declared schedule bound, but it
must re-establish the same public templates. The checker requires an exact
match; one claim kind cannot replace another. Stronger enforcement is allowed,
but a different public claim defines a different contract.
The candidate must also preserve the projected effects and their meaning under
$\eta$. It may add internal state or evidence and make an effect unreachable,
but cannot erase a projected field, relabel an external consequence, or change
how $\varphi_{\rm app}$ interprets it. Reachability may change, but effect meaning may not.

These restrictions prevent redefinition: workload preservation excludes
refusal, claim-template preservation excludes changing the public contract, and
effect preservation excludes hiding a forbidden consequence. A new observation
or control may change the preventive choices leading to $F$, but not the
external consequence judged by $\varphi_{\rm app}$.

\noindent\textbf{Search and minimality.}
Given a declared repair set $\mathcal B$ and cost order, \system evaluates the
combinations that $\mathcal B$ permits and reports each Pareto-minimal repair
(one not dominated under that order), or
\textsc{NoRepairFound} when the declared repair set contains no winning candidate.
Minimality is relative to the finite contract, repair set,
bounds, and declared cost order. The set may contain the Bind, Re-enter, and
Fence placements, as well as other declared changes. \system then validates
these candidates and connects the repaired contract to deployment traces.

\section{\texorpdfstring{\textsc{EffectBound}}{EFFECTBOUND}}
\label{sec:system}

\system proceeds in three stages (Figure~\ref{fig:pipeline}): construct an
analysis contract from provider evidence, check its verdict and any repair, and
validate deployment traces.

\subsection{Grounding the Contract}

\noindent\textbf{Contract inputs.}
A contract records provider behavior and interface, policy and workload, effect
frontier, authorization origins and continuation links, claims, exclusions, and
repair candidates. Schemas alone do not determine when targets resolve, whether
operations are atomic, or how retries behave or requests fan out. Documentation, pinned
source, controlled execution, or instrumentation must support those details.
\system checks the resulting contract, but it does not discover provider
paths.

\noindent\textbf{Tracing authorization to effects.}
For every represented path, the analyst follows $g\to a\to c$ toward a
proposed model boundary $B$, then asks whether any carrier can reach the
evidence-supported effect frontier $F$. The analyst asks where authority is bound,
whether an earlier check remains valid, and whether in-flight work can continue.
Tests exercise late resolution, mutation, queues, redelivery, restart, and
membership changes. The evidence record identifies the source and hash for each
path, exclusion, frontier, and response meaning. Validation rejects missing
evidence or hash mismatches.

\noindent\textbf{Fail-closed validation.}
A response either makes a closure claim or abstains from making one. \textsc{Unsupported}
results when authorization origins, continuation links, or declared exclusions
lack evidence, or when coverage of hidden behavior or exposed controls is
incomplete. The Kafka trace for one authorized request continues beyond the
proposed Kafka convergence boundary $B_{\rm applied}$ to the append frontier
$F_{\rm append}$, so grounding rejects that contract before solving.
Validation fails closed when represented premises lack support. It cannot detect
a provider path absent from both the evidence inventory and the contract.

\subsection{Checking Verdicts and Repairs}

\noindent\textbf{Verdicts and assurance.}
A grounded contract yields a checked winning strategy or losing certificate.
Lean validates all 17 published source-to-game translations and finite-game
certificates. Four NATS purge contracts---base, Bind, native-terminal (zero
counters treated as terminal), and Tracked Drain (awaiting tracked callback
completion)---declare the same instance-effect-closure requirement. For these four,
Lean checks carrier-link well-formedness and derives each result from the
materialized issuance, carrier, frontier-crossing suffix, permission,
application-policy, and coverage relations.
The compiler and solver need not be trusted after the finite contract is constructed
(Table~\ref{tab:assurance-boundary}).

\begin{table}[t]
\centering
\caption{Assurance boundary. Lean consumes, but does not infer, provider
semantics.}
\label{tab:assurance-boundary}
\footnotesize
\setlength{\tabcolsep}{3pt}
\renewcommand{\tabularxcolumn}[1]{m{#1}}
\begin{tabularx}{\columnwidth}{@{}m{0.55\columnwidth}X@{}}
\toprule
Material & Status \\
\midrule
Policy, authorization scope, workload, claim kind/subject, exclusions, repair set & Analyst-authored \\
Boundary/frontier; authorization links/effect paths; recorded safety/policy verdicts; scope/configuration match flags; scoped issuance/carrier coverage; post-settlement relation/invariant & Grounded/materialized input \\
Carrier-link well-formedness; invariant entry, self-loops, inductiveness, claim persistence, and safety; closure, game translation, certificate & Lean-checked \\
Provider completeness; conformance of all provider executions & Not established \\
\bottomrule
\end{tabularx}
\end{table}

\noindent\textbf{Why analysis precedes repair.}
Testing one Fence cannot decide whether another strategy over $I$ works. Safety,
truthful claims, and completion rule out denying everything. A losing certificate
has $q_0\in L\subseteq Q\setminus G$. Every action available at an admissible
$q\in L$ has an admitted successor in $L$.
Grounding yields \textsc{Unsupported} when required premises lack evidence.
Candidate checking rejects changes that block required work, reuse an
outdated configuration version, or alter claims or effect semantics. The analysis thus separates
insufficient grounding from missing boundary power before checking Bind,
Re-enter, or Fence candidates from the declared repair set.

\begin{table*}[t!]
\centering
\caption{Authorization-to-effect paths selected for distinct mechanisms. Each
contract records its detailed scope and evidence obligations. RBAC denotes role-based
access control; PubAck denotes a downstream publish acknowledgment.}
\label{tab:grounded-routes}
\footnotesize
\setlength{\tabcolsep}{2.6pt}
\begin{tabularx}{\textwidth}{@{}
  >{\raggedright\arraybackslash}p{0.12\textwidth}
  >{\raggedright\arraybackslash}p{0.225\textwidth}
  >{\raggedright\arraybackslash}p{0.20\textwidth}
  >{\raggedright\arraybackslash}p{0.225\textwidth}
  >{\raggedright\arraybackslash}X@{}}
\toprule
Path & Authorization decision & Carrier & Effect frontier & Open-path failure \\
\midrule
GitHub MCP & Application approves head $h_1$ & Late-bound merge request & Atomic compare-and-merge~\cite{githubmcp,githubmerge} & Atomic Bind absent \\
NATS redelivery & Application authorizes request $r$ for one logical effect & Redelivered work carrying $a$ & PubAck downstream append~\cite{natsdelivery,natspublishing,natssource} & Logical effect unbound \\
Kubernetes mutation & Admission authorizes the checked object & Request through mutation chain & Final validation before persistence~\cite{k8sadmission} & No final-object recheck \\
Kubernetes RBAC & RBAC authorizer allows create & Already-admitted request & ConfigMap persistence~\cite{k8sadmission,k8srbac} & Old work not rechecked \\
NATS purge & Application authorizes one dispatched delivery effect & Callback carrying $a$ & PubAck downstream append~\cite{natssource} & Local state omits callback \\
Kafka produce & WRITE ACL authorizer allows request & In-flight \texttt{ProduceRequest} & Append to target leader's local log~\cite{kafkakip801,kafkadeleteacls} & Model stops before append \\
\bottomrule
\end{tabularx}
\end{table*}
\subsection{From Certificates to Execution}

\noindent\textbf{Scoped claim records.}
When a success response reports closure, its claim record names the kind,
subject, analysis contract, policy, frontier, scope $\kappa$, and configuration
version $\beta$. Its claim and context fields must match the required template. After a
switch to $\beta'$, the deployment may rely on that
claim only with an explicit preservation proof. The original claim remains tied
to the configuration version $\beta$ under which it was emitted. A checked winning
certificate establishes only the claims emitted by its strategy and only within
the validated contract.

\noindent\textbf{Per-trace conformance.}
Deployment traces record identities, event order, frontier placement, and
effects. The trace checker rejects mismatches with the repaired contract and
deployment requirements. An accepted trace establishes conformance only for
that execution. All provider executions require a separate proof or complete
online checking.

\noindent\textbf{Worked example.}
For the GitHub MCP-to-REST path, $g$ is the grant lineage for reviewed head
$h_1$, $a$ its pull-request~42 authorization instance, $c$ the dispatched request,
and $F$ the atomic merge. The policy
accepts $\mathsf{Merge}(42,h)$ only for $h=h_1$~\cite{githubmcp,githubmerge}.
After both worlds return $h_1$, the same information state contains A, where the
head remains $h_1$, and U, where it may later move to $h_2$. The interface
exposes read, merge without a head-SHA precondition, and Abort:
\begin{certificatebox}{Losing certificate: GitHub MCP-to-REST path}
\textbf{Obligations}
\begin{itemize}[leftmargin=1.25em,itemsep=0.25ex,parsep=0pt,topsep=0.35ex]
  \item \textbf{Required:} merge the authorized head $h_1$.
  \item \textbf{Safety:} never merge the changed head $h_2$.
\end{itemize}
\textbf{Same observation.} Both worlds return $h_1$ at the final read.
\par\noindent
\textbf{Hidden divergence.} U may still move to $h_2$ afterward.
\par\smallskip\noindent
\textbf{Every exposed action loses}
\begin{itemize}[leftmargin=1.25em,itemsep=0.25ex,parsep=0pt,topsep=0.35ex]
  \item \texttt{merge} may commit $h_2$ (\emph{unsafe});
  \item \texttt{read} leaves the dilemma unresolved (\emph{cannot guarantee completion});
  \item \textsf{Abort} blocks the required merge (\emph{incomplete}).
\end{itemize}
\par\smallskip\noindent
\textbf{Missing power.} Atomic binding to authorized SHA $h_1$.
\end{certificatebox}

The Bind candidate adds \texttt{sha=$h_1$}, merging $h_1$ in the required
world and rejecting a moved head before $F$, without changing the public
response or effect semantics. The SHA becomes an atomic merge precondition rather than
another observed value. Native REST traces link the authorized head through this check
to the merged head. A separate run of the pinned GitHub MCP server confirms
the late-bound behavior end to end. This contract promises a safe merge. It makes no
closure claim.
The evaluation applies the analysis to six paths and studies Kafka in depth.

\section{Evaluation}
\label{sec:results}

\begin{table*}[t]
\centering
\caption{Open paths, boundary changes, and outcomes. ``Base'' and ``Repaired''
count forbidden effects after the native and repaired success signals.
``Complete'' counts designated authorized trials that finish. The Kafka row
previews Study~B. GitHub counts are REST replay pairs. One run through the
official GitHub MCP server provides an additional end-to-end confirmation.}
\label{tab:deployed-paths}
\footnotesize
\setlength{\tabcolsep}{3.2pt}
\begin{tabular}{lllllrr}
\toprule
Path & Open continuation & Base & Policy goal & Repair &
Repaired & Complete \\
\midrule
GitHub MCP-to-REST & Late-bound pull-request target & 25/25 moved head &
Merge authorized $h_1$ & Bind SHA & 0/25 & 25/25 \\
NATS redelivery & Redelivery carrying $a$ & 10/10 duplicate &
One output per $a$ & Bind ID & 0/10 & 10/10 \\
Kubernetes mutation & Post-check mutation chain & 25/25 privileged &
Persist safe object & Re-enter & 0/25 & 25/25 \\
Kubernetes RBAC & Paused admitted request & 30/30 persisted &
No persistence under old authority & Re-enter & 0/30 & 30/30 \\
NATS purge & Already-dispatched callback & 30/30 appended &
{\scriptsize$\mathsf{InstanceEffectClosed}(a)@(\kappa,\beta)$} & Tracked Drain & 0/30 & 30/30 \\
Kafka produce & Old authorized request & 10/10 appended &
{\scriptsize$\mathsf{LineageClosed}(g)@(\kappa,\beta)$} & Effect Fence & 0/10 & 10/10 \\
\bottomrule
\end{tabular}
\end{table*}

\subsection{Evaluation Design}
\label{sec:evaluation-design}

\noindent\textbf{Questions and evidence sets.}
The evaluation has two primary studies and one secondary assurance check.
Study~A asks whether the three placements eliminate the challenged forbidden
effect in six structurally different authorization-to-effect paths without
blocking designated work.
Study~B examines Kafka in depth: whether Effect Fence closes paths from
previously authorized requests, preserves unrelated service, and at what cost
in the test deployment. The secondary check covers abstention when evidence is
missing, derivation of the declared closure result for the NATS purge base
contract and three variants, and rejection of corrupted translations and
certificates. These evidence sets answer different questions and are reported separately.

\noindent\textbf{Path selection and contract scope.}
Paths were chosen to cover different mechanisms. The six live paths span late
binding in the GitHub MCP-to-REST path;
post-check mutation and already-admitted execution in Kubernetes; redelivery
and callback-held work in NATS; and distributed in-flight execution in Kafka.
Together they exercise Bind, Re-enter, and Fence across four systems.

A preregistered cross-layer audit follows one policy-relevant control through 20
operations, from the native API through a software development kit
(SDK) to a tool. Native-to-SDK representations retain it in 20/20 cases; tool
forms preserve 11/20, omit two, weaken two, and move five to a different decision
point (see Appendix~\ref{app:confidence-detail}).

Each path fixes a provider boundary, effect frontier, policy, workload, bounds,
exposed controls, candidate repairs, and exclusions before analysis. Every
policy-relevant provider premise used in a grounded verdict is supported by at
least two of three evidence categories: documentation, pinned schema or source,
and controlled execution. The evidence records list available capabilities,
premises, and exclusions. Missing support for a
transition, ordering fact, effect, authorization link, return meaning, or
exclusion yields \textsc{Unsupported} before finite analysis.
Table~\ref{tab:grounded-routes} records each authorization decision, carrier,
frontier, and open-path failure. Public closure claims include their scope
$\kappa$ and configuration $\beta$ in Table~\ref{tab:deployed-paths} and the
machine-readable records.

\noindent\textbf{Common challenge protocol.}
Study~A constructs required and forbidden executions that remain
indistinguishable at the selected boundary. Each challenge follows the same
sequence: stall work after it has been authorized, change authorization or
provider state, establish the visible success condition, release the old work,
and observe whether it crosses the effect frontier. Base and Repaired count
forbidden effects after native and repaired success; and Complete counts designated
authorized completions. These challenges can show that a model stops before the
frontier, as Kafka did, but cannot establish that no additional provider path exists.

\noindent\textbf{Kafka campaign.}
Study~B compares Applied-State Fence (waiting for fixed-set authorizer
convergence) with Effect Fence under matched load, stalled requests, placement
races, crashes, and a fixed controller/broker set. Component removals test
blocking new use, recording the authorization version, and draining earlier requests.
A bounded extension tests one broker rejoin after the native authorizer becomes
ready. Five configurations measure protected-path cost, and four
matched pairs compare the policy-scoped gate with a lab-only leader-wide gate
that also blocks unrelated work. We preregistered the high-contention sample
size before seeing results. The claim covers synchronous nontransactional
append to the target leader's local log for one principal and topic, the
complete matching ACL set, and the fixed controller and broker set.

\noindent\textbf{Grounding and validation plan.}
A fixed abstention audit selects eight operations, one from each stratum of a
16-operation frame. It tests whether missing authorization links or frontiers yield
\textsc{Unsupported} rather than being silently invented, rather than estimate
applicability.
Separate finite suites test classification accuracy on 24 cases, all 17 game
certificates, 42 certificate corruptions, seven semantic mutations, and repair
preservation and minimality (Appendix~\ref{app:confidence-detail} gives the
audit's selection rule and denominators and summarizes the 12-contract suite;
the artifact records its hashes).

\subsection{Study A: Open Paths Across Providers}

\noindent\textbf{Question and coverage.}
Study~A asks whether different provider mechanisms leave the same kind of path
from authorization to a forbidden effect, and whether a control at the last
policy-relevant branch can close it without blocking required work.
Table~\ref{tab:deployed-paths} covers six paths. Bind fixes identity before
late resolution or repetition; Re-enter checks after relevant state changes and
Fence restricts or drains work that has passed the authorization check.
The placements answer different questions and are not interchangeable: Bind
preserves what was authorized, Re-enter recomputes permission after the facts
change, and Fence accounts for work that will not pass through authorization
again. Each challenge pairs an open continuation with required authorized
work, so success requires more than blocking the path.

\noindent\textbf{Bind the authorized identity.}
Bind applies when the provider resolves the target late or may execute the same
authorized instance more than once. GitHub's native merge endpoint accepts an
atomic head-\texttt{sha} precondition, whereas the pinned MCP tool omits
it~\cite{githubmcp,githubmerge}. Table~\ref{tab:deployed-paths} reports 25 paired
REST replays testing the native precondition: binding to $h_1$ preserves every
unchanged-head merge and rejects every moved-head merge. A separate end-to-end
run of the pinned official GitHub MCP server tests the MCP-to-REST path and
confirms late binding. The server reads $h_1$. After a move to $h_2$, its merge
commits $h_2$. Unchanged-head and \texttt{create\_issue} control trials pass. Native REST with stale
\texttt{sha=$h_1$} returns 409 without merging.
Both paths use the same merge semantics and workload. Only the native boundary
exposes the atomic precondition, so the contrast isolates control lost at the
abstraction boundary.

NATS redelivery presents the same placement problem through repetition rather
than late target resolution. Here $a$ is the application's prior authorization
of request $r$ and payload $p$ to cause exactly one logical effect $e(r,p)$. Each
delivery attempt carries the same $a$. In 10/10 schedules the base produces two
durable effects. An idempotency ID bound to $(a,r,p)$ admits both attempts but
commits one effect within the configured duplicate window. After expiry both
commit~\cite{natsdelivery,natspublishing,natssource} (Appendix~\ref{app:confidence-detail}).
Including $a$ and $p$ binds deduplication to the authorized logical effect,
rather than treating all equal-looking payloads as the same authority. The
expiry run is the negative control for the declared time bound.

\noindent\textbf{Re-enter after state changes.}
Re-enter applies when policy-relevant state can change after the original check.
Kubernetes mutating admission precedes final validating
admission~\cite{k8sadmission}. Checking the final object rejects all 25
privileged mutations while preserving all 25 required objects.
This path changes the object after its first check. The RBAC path below instead
changes the authority state while an already-admitted request survives. Both
need Re-enter, but at different last policy-relevant branches.

The RBAC path moves the same problem into already-admitted execution. In
Kubernetes~v1.36.1, a validating-admission hook pauses a ConfigMap create after
RBAC permits it~\cite{k8sadmission,k8srbac}. We delete the sole RoleBinding and a fresh
create is denied, but releasing the earlier request persists the object in
30/30 trials. Rechecking RBAC in the callback gives 0/30 forbidden effects and
30/30 authorized completions. The result covers one principal, RoleBinding, object,
and API-server path, excluding regrant and alternate authorization. The recheck
is not atomic with storage.

\noindent\textbf{Fence work already past authorization.}
Fence addresses authorized work that has crossed the decision point. NATS
purge completes the stream-storage operation but does not recall a delivery
already dispatched into application code. After a subject-filtered purge, the
trace records zero stream messages, \texttt{num\_ack\_pending}, and
\texttt{num\_pending} for the selected consumer~\cite{natssource}. Releasing the
held callback nevertheless causes the application to publish the controlled
downstream record in all 30 trials. The application authorized one
downstream record, and the callback still carries that authority after the
stream state changes. The zero counters describe only NATS state and do not show
that the callback has finished. If the model did not link the callback to that
authorization, the callback would cease to be a carrier and instance-effect
closure would appear true.
Bind and Tracked Drain both satisfy the finite contract. Table~\ref{tab:deployed-paths}
reports Tracked Drain because only it was tested on live traces and directly
supports the post-purge closure claim. Waiting for the tracked callback to
finish yields 0/30 post-repair forbidden effects, completes all 30 required
trials, and leaves unrelated work unblocked. The base contract is
\textsc{Unrealizable} for this stronger closure requirement. Native purge need
not cancel an already-dispatched callback~\cite{natssource}.

\begin{rqanswer}{Study A}
\textbf{Finding.} Across six paths, work carrying earlier authority can
still cross the effect frontier after the authorization decision or a later
local completion signal.
\par\noindent
\textbf{Intervention.} Bind, Re-enter, and Fence close different last
policy-relevant branches. For every studied path, the matching placement
eliminates the challenged forbidden path while designated authorized work
completes.
\par\noindent
\textbf{Scope.} This establishes structural coverage and feasibility
within each path's declared scope.
\end{rqanswer}

\subsection{Study B: Kafka Effect Fence}
\label{sec:study-b}

\noindent\textbf{Question and scope.}
Study~B asks how far Kafka lineage closure must extend. We compare native deletion,
fixed-set authorizer convergence, and Effect Fence. We then test component
removals, along with concurrency, crashes, a bounded rejoin, admission scope,
and cost.
The claim covers synchronous nontransactional append to the target leader's
local log for one principal and topic, the complete matching ACL set, and a
fixed controller and broker set.

\begin{figure*}[t]
\centering
\begin{tikzpicture}[
  font=\sffamily\footnotesize,
  >={Latex[length=1.6mm]},
  flow/.style={draw=black!55, rounded corners=1.5pt, align=center,
    minimum height=9.5mm, text width=2.85cm, inner sep=3pt},
  normal/.style={flow, fill=black!2},
  visible/.style={flow, fill=black!5, draw=black!68},
  unsafe/.style={flow, fill=black!9, draw=black!82, line width=0.7pt},
  safe/.style={flow, fill=white, draw=black!82, line width=0.8pt},
  flowarrow/.style={-{Latex[length=1.6mm]}, semithick, draw=black!65},
  lane/.style={font=\sffamily\bfseries\footnotesize, anchor=east}
]
  \node[lane] (base-label) at (-0.15,2.05) {Base};
  \node[normal]  (b1) at (1.55,2.05) {\texttt{DeleteAcls}\\completes};
  \node[visible] (b2) at (5.25,2.05) {API returns\\success};
  \node[unsafe]  (b3) at (8.95,2.05) {Stale broker accepts\\revoked request};
  \node[unsafe]  (b4) at (12.65,2.05) {\textbf{FORBIDDEN EFFECT}\\Revoked append\\10/10};
  \draw[flowarrow] (b1) -- (b2);
  \draw[flowarrow] (b2) -- (b3);
  \draw[flowarrow] (b3) -- (b4);

  \node[lane, align=right] (state-label) at (-0.15,0) {Applied-State\\Fence};
  \node[normal]  (s1) at (1.55,0) {Old request authorized\\and paused};
  \node[normal]  (s2) at (5.25,0) {Every fixed-set authorizer\\applies the deletion};
  \node[visible] (s3) at (8.95,0) {\textbf{typed claim}\\$\mathsf{FutureUseClosed}(g)$\\$@(\kappa,\beta)$};
  \node[unsafe]  (s4) at (12.65,0) {\textbf{FORBIDDEN EFFECT}\\Release old request\\append to target leader's local log};
  \draw[flowarrow] (s1) -- (s2);
  \draw[flowarrow] (s2) -- (s3);
  \draw[flowarrow] (s3) -- (s4);

  \node[lane, align=right] (effect-label) at (-0.15,-2.05) {Effect\\Fence};
  \node[normal]  (e1) at (1.55,-2.05) {Block new use of\\revoked authority};
  \node[normal]  (e2) at (5.25,-2.05) {Apply deletion and wait\\for earlier requests};
  \node[visible] (e3) at (8.95,-2.05) {\textbf{typed claim}\\$\mathsf{LineageClosed}(g)$\\$@(\kappa,\beta)$};
  \node[safe]    (e4) at (12.65,-2.05) {\textbf{CLOSED}\\0/10 post-repair\\forbidden effects};
  \draw[flowarrow] (e1) -- (e2);
  \draw[flowarrow] (e2) -- (e3);
  \draw[flowarrow] (e3) -- (e4);

  \begin{scope}[on background layer]
    \node[fit=(b1)(b4), fill=black!1, draw=black!28, densely dashed, rounded corners=2pt,
      inner xsep=5pt, inner ysep=6pt] {};
    \node[fit=(s1)(s4), fill=black!1, draw=black!32, densely dashed, rounded corners=2pt,
      inner xsep=5pt, inner ysep=6pt] {};
    \node[fit=(e1)(e4), fill=black!1, draw=black!55, rounded corners=2pt,
      inner xsep=5pt, inner ysep=6pt] {};
  \end{scope}
\end{tikzpicture}
\caption{Kafka's three return conditions reach progressively deeper boundaries. Applied-state
convergence establishes only $\mathsf{FutureUseClosed}(g)@(\kappa,\beta)$, so a paused
authorized request can still append. Effect Fence blocks revoked use and waits
for earlier requests before establishing
$\mathsf{LineageClosed}(g)@(\kappa,\beta)$.}
\Description{Three horizontal Kafka lanes compare Base, Applied-State Fence,
and Effect Fence. Base returns success while a stale broker accepts a revoked
request, producing 10 of 10 forbidden appends. Applied-State Fence establishes
future-use closure, but a paused old request still appends. Effect Fence blocks
new use of revoked authority and waits for earlier requests before establishing lineage
closure, producing zero of ten post-repair forbidden effects.}
\label{fig:kafka}
\end{figure*}

\noindent\textbf{Closure depth and frontier adequacy.}
Within scope $\kappa$, Kafka authorizers apply ACL records from the metadata log
and may lag the controller~\cite{kafkakip801}. Native
\texttt{DeleteAclsResult.all()} reports
completion of the requested deletions. Convergence requires checking every
broker~\cite{kafkadeleteacls}. In the base path, a lagging authorizer admits
the request and produces the forbidden append in 10/10 trials.

Applied-State Fence waits for every authorizer in the fixed broker set to apply
the deletion. For lineage $g$ covering the complete matching ACL set, this establishes
$\mathsf{FutureUseClosed}(g)@(\kappa,\beta)$, but the application requires
$\mathsf{LineageClosed}(g)@(\kappa,\beta)$: no request previously authorized by
$g$ may append after the fence returns, while unrelated service remains usable.
On Kafka~4.3.1, we pause a \texttt{ProduceRequest} after authorization, wait for
Applied-State Fence success, then release it. The old request still reaches the
target leader's local log (Figure~\ref{fig:kafka}):
\begin{equation*}
  \mathsf{FutureUseClosed}(g,q)\ \not\Rightarrow\
  \mathsf{LineageEffectClosed}_{\varphi_{\rm app}}(g,q).
\end{equation*}
Topic authorization precedes \texttt{handleProduceAppend}. The Applied-State
analysis correctly established authorizer convergence, but its contract stopped
before the paused request's remaining path to append: its proposed model boundary
$B_{\rm applied}$ preceded the effect frontier $F_{\rm append}$.
Adding that path changes the verdict: future-use closure is insufficient.
Thus the three
returns establish, respectively, deletion completion, future-use closure, and
lineage closure at append; alternate credentials, wildcards, and future
membership remain outside the claim.

\noindent\textbf{Effect Fence construction.}
The omitted path from authorization to append makes Applied-State Fence
insufficient. Effect Fence records each request's authorization epoch---the
authority version at its check---and holds a lease through append. Each lease
identifies one admitted request and remains live through append. Revocation
closes the epoch, rejects new leases under it, and waits for existing leases to
drain. The gate closes future use and the leases account for work already admitted.
Fence succeeds
only after every fixed-set broker shows that it applied the deletion
in its current boot and no closed-epoch lease remains. Together, those conditions
rule out new use and effects from earlier work, establishing
$\mathsf{LineageClosed}(g)@(\kappa,\beta)$. The gate and leases enforce the
ordering. Broker evidence supports the closure claim. \system checks the
condition, and trace checking links the contract to the Java execution
(Appendix~\ref{app:confidence-detail}).

\begin{table}[!t]
\centering
\caption{Old-instance path ablation at success (8 deterministic runs per
variant).}
\label{tab:fence-ablation}
\footnotesize
\setlength{\tabcolsep}{3pt}
\begin{tabular}{@{}lcc@{}}
\toprule
Variant & Old path closed & Unrelated admitted \\
\midrule
Admission closed, no drain & \xmark & \cmark \\
Epoch gate, no drain & \xmark & \cmark \\
Drain once, admission open & \xmark & \cmark \\
Lab-only leader-wide drain & \cmark & \xmark \\
Policy-scoped lease gate & \cmark & \cmark \\
\bottomrule
\end{tabular}
\end{table}

\noindent\textbf{Safety and composition evidence.}
Table~\ref{tab:fence-ablation} separates closing admission, epoch checking,
and draining old instances. Let $a_0$ be the old authorization instance and $a_1$
the authorization instance carried by a required work item outside $a_0$'s
closed epoch. Instance $a_1$ completes in 8/8 for every variant. Every partial
defense leaves an open carrier of $a_0$ in 8/8. Both full defenses make $a_0$
effect-closed. Only the leader-wide drain blocks unrelated work.
Full lineage closure also requires fixed-set evidence for
$\mathsf{FutureUseClosed}(g)@(\kappa,\beta)$. Omitting either requirement leaves
an admitted execution that violates the claim. ``Leader-wide'' names only this
test variant.

The concurrency probe pauses after authorization, once the request has captured
its epoch and while it still holds the gate. This ensures that revocation cannot
change the captured epoch.
Across 120 stalled-lease trials, Fence remains pending while an old lease is
live. Unrelated work completes, and releasing the lease permits success. The
placement test observes append to the target leader's local log before
acknowledgments. A pre-append crash creates no effect. A post-append crash
preserves it. Restart invalidates evidence from the prior boot until a fresh
authorizer snapshot loads, while service remains
available. Thus the measured effect is synchronous nontransactional append to
the target leader's local log rather than acknowledgment, high-watermark visibility, or transaction commit.

The executable Kafka protocol checker covers 25 targeted cases and 1,000 generated
schedules and 10 live control trials confirm that required service remains
available. Within the finite contract, Lean proves atomic capture, rejects
post-closure leases, and permits success only after every broker applies the
deletion and no old lease remains. Together, fixed-set current-boot reports and
an empty old-lease set exclude old-epoch append.
In 10 live trials per rejoin condition, base rejoin produces 10/10 unsafe
effects. Fence with or without a broker-join gate produces 0/10 unsafe effects
and 10/10 authorized completions. The coordinator records the matching ACL set,
waits for concurrent deletions, and invalidates return if any ACL is re-added.
Kafka readiness provides the fresh authorizer snapshot the join gate requires.
The gate adds no observed protection in these trials. Other membership changes
require a separate admission proof.

\noindent\textbf{Trace linkage and service scope.}
Per-trace checking connects authorization, same-thread epoch capture, lease,
deletion, success, and append. The append path reacquires the gate, accepts only
the captured current epoch, and holds its lease through synchronous append to
the target leader's local log. Revocation must wait for the lease or advance the
epoch first. The trace checker
rejects a mismatched identity, epoch, boot, order, frontier, or effect.
This result covers only the recorded execution.

The scope assumes one authorizer per Java Virtual Machine (JVM) boot, snapshot
readiness before traffic, and a fixed controller and broker set. It excludes authorizer
replacement, cross-thread handoff, transactions, acknowledgments,
high-watermark visibility, membership reconfiguration, rollback, and unbounded
execution. Fairness affects only progress. Counters and logs serve only as evidence.

In four clean-boot pairs with alternating run order, we hold a target lease for at least
1.5\,s while sending unrelated work. That work finishes before release in 4/4
policy-scoped runs (median 469.8\,ms) but 0/4 lab-only leader-wide runs (median
1982.0\,ms after release), a median within-pair difference of $+1517.4$\,ms.
Fence remains pending. After release, the final trace contains both records.
Thus the scoped gate preserves unrelated work during the 1.5-s hold.

\begin{table}[t]
\centering
\caption{Cost on the live protected path ($n=12$ matched runs per variant and
concurrency level). Entries are median paired deltas against unmodified
Kafka (Stock); $c$ is producer concurrency and p99 is 99th-percentile latency.}
\label{tab:kafka-incremental}
\footnotesize
\setlength{\tabcolsep}{1.0pt}
\renewcommand{\arraystretch}{1.03}
\begin{tabularx}{\columnwidth}{@{}>{\raggedright\arraybackslash}Xrrrr@{}}
\toprule
Variant & \multicolumn{2}{c}{$\Delta$ throughput} & \multicolumn{2}{c}{$\Delta$ p99 (ms)} \\
\cmidrule(lr){2-3}\cmidrule(l){4-5}
 & $c=32$ & $c=128$ & $c=32$ & $c=128$ \\
\midrule
Instrumentation & -3.4\% & +2.3\% & -1.0 & -8.7 \\
Epoch capture & -2.5\% & +4.7\% & +1.2 & -10.6 \\
Old-instance fence & -4.2\% & -11.8\% & +10.4 & +36.5 \\
Full fixed-set configuration & -18.9\% & -15.0\% & +89.1 & +2.3 \\
\bottomrule
\end{tabularx}

\end{table}

\noindent\textbf{Cost, timing, and deployment.}
Only the old-instance fence and full fixed-set variants provide security
guarantees. The first drains old work through the epoch gate, while the second
adds current-boot authorizer evidence for lineage closure.
Table~\ref{tab:kafka-incremental} reports paired
medians from 12 matched runs per variant and contention level. The pinned
Kafka~4.3.1 deployment has one Kafka Raft (KRaft) controller, three brokers,
and one three-replica partition (minimum in-sync replicas: two). Measurements
include connection and metadata startup. Host resources and scheduling are
uncontrolled. At $c=32,128$, respectively, the old-instance fence changes
throughput by $-4.2\%,-11.8\%$ and p99 by $+10.4,+36.5$\,ms; the full variant changes
them by $-18.9\%,-15.0\%$ and $+89.1,+2.3$\,ms. Component differences change
sign and span broad ranges, so no individual cost can be isolated. Batching or
shared-memory lease tracking may reduce overhead but remain untested (see
Appendix~\ref{app:confidence-detail} for details about absolute medians and ranges).

Once the fixed-set evidence is current, the median interval from \texttt{DeleteAcls}
completion to Fence success is 18.6\,ms ($n=10$, range 7.7--49.6\,ms). Median
time to Fence success is 337.9\,ms after metadata recovery and 5.26\,s
after a fresh-authorizer boot, and those intervals start at different events.

The 644-line prototype uses a custom authorizer and a Java agent at Kafka's
produce-append hook, leaving Kafka unmodified. Component removals, faults, broker
rejoin, and held requests test the controls, invariant, and composition within
the stated scope. Trace checking links authorization to append. This shows
feasibility for the declared path. Production requires a maintained hook and a
protocol that keeps membership evidence aligned with the claim.

\begin{rqanswer}{Study B}
\textbf{Finding.} Fixed-set authorizer convergence closes future use, but a
request admitted under the old epoch can still append.
\par\noindent
\textbf{Intervention.} Effect Fence combines current-boot deletion
evidence, an epoch gate, and an old-lease drain. It establishes lineage closure
for the declared path while designated work completes. Its scoped gate admits
unrelated work that the lab-only leader-wide gate blocks.
\par\noindent
\textbf{Scope.} The claim covers synchronous nontransactional append to the
target leader's local log under the declared fixed controller and broker set.
\end{rqanswer}

\subsection{Secondary Check: Fail-Closed Validation}

\noindent\textbf{Does missing evidence stop analysis?}
Seven of eight operations in the 16-operation frame lack evidence
for the proposed effect path, the link from work to its authorization, or the
return meaning and therefore yield \textsc{Unsupported}. The audit checks that
missing evidence stops analysis. It does not measure applicability or verdict frequency.
Only NATS purge supports
the proposed analysis: its base and native-terminal variants are
\textsc{Unrealizable}, and Bind and Tracked Drain are \textsc{Realizable}. Only
Tracked Drain is validated against live traces. Separate GitHub issue-creation
and Kubernetes scaling checks serve as positive controls.

\noindent\textbf{Can malformed inputs or biased repair sets produce a false positive?}
Across 24 finite cases, \system matches exhaustive search, including one losing case
that requires three worlds and repairs that combine controls. Lean validates all
17 translated game certificates (2 winning, 15 losing). Fourteen separate
grounding diagnostics record missing effect histories. Brute force matches the solver's
Pareto-minimal sets for three repair cases, and all 42 corrupted certificates are
rejected. The omitted Kafka path invalidates the Applied-State contract before solving.
None of 38 subsets drawn from 16 irrelevant controls wins; after removing the
only repair that closes the path, none of the 70 tested subsets wins. A corrupted exclusion also
produces no false positive.

\noindent\textbf{Are typed claims derived?}
Pre-search validation rejects five public-claim mutations and two attempts to
relabel or erase effects. All 21 evaluated repairs preserve effects. The four NATS
purge contracts---base, native-terminal, Bind, and Tracked Drain---declare
$\mathsf{InstanceEffectClosed}(a_{\rm purged\_delivery})$.
From the contract's issuance, carrier, suffix, permission, policy, and coverage
relations, Lean derives the closure judgment and checks $\mathsf{ClaimSafe}$.
Bind and Tracked Drain satisfy the requirement; the base and native-terminal
variants do not. Kafka is checked
separately through the Effect Fence invariant and fixed-set composition
(Appendix~\ref{app:finite-detail}) and bounded trace evidence
(Appendix~\ref{app:confidence-detail}).

\section{Discussion}

\noindent\textbf{A closure response must say what has closed.}
  Lineage closure combines future-use closure with effect closure for every
  in-scope instance issued from that lineage. Instance-effect closure covers one.  Success alone establishes no closure. Claims name subject, kind, policy,
  frontier, scope, and configuration and persist after return. Configuration
  changes must preserve them. Closure does not require quiescence. Bind, Re-enter, and Fence close such paths by binding, rechecking, or waiting while required  work continues. Kafka's scoped gate preserves unrelated work.

\noindent\textbf{Verification must reach the effect.}
In our cross-layer audit, tool forms omitted, weakened, or moved the selected
policy-relevant control in 9 of 20 operations.
Mediating every exposed call is insufficient
when identity resolves later, provider state changes after the check, or admitted
work continues after the call. A machine-checked proof can be correct for its
contract even when that contract stops before the application effect. Frontier selection is part of the
security argument. End-to-end mediation must preserve
the checked authorization condition to the effect frontier, recheck after the last policy-relevant
change, or delay return until admitted work can no longer produce a forbidden
effect. Effect closure also covers work already admitted. Composing claims
requires related subjects and kinds, aligned policy,
frontier, scope, and configuration, and proof of preservation.

\noindent\textbf{Boundary realizability identifies missing power.}
Within the finite contract, a losing certificate rules out every strategy
available through the modeled interface. An authorization twin pairs
indistinguishable required and unsafe worlds, so every available choice permits
a forbidden prefix or blocks required work. The tested GitHub MCP interface
exposes merge without the native SHA precondition, so a mediator limited to it
cannot recreate atomic binding. Without binding, recheck, or fence, safety and
completion are incompatible at that boundary. The interface must expose more
control or return a weaker closure claim. Repairs may add state, epochs, or
leases but must preserve workload, responses, claims, and the meaning and
representation of effects. A changed public claim defines a new contract. A
mediator change defines a new configuration version. A deployment may rely on
earlier guarantees only with a preservation proof. Least privilege limits what
may be authorized. Boundary realizability asks whether the interface preserves
that limit through the effect frontier.

\noindent\textbf{Provider-local completion does not imply application-level closure.}
Provider completion may be correct while admitted work can still reach
  a rejected effect. Closure must be
  established separately from provider conformance. We privately reported our findings to maintainers (Appendix~\ref{app:ethics}).

\noindent\textbf{Threats to validity.}
\emph{Construct validity} depends on modeled paths, authorization links, the
frontier, and post-return behavior. \textsc{Unsupported} flags missing evidence
for represented premises. It cannot detect omitted paths, so provider-wide
completeness remains unproved. \emph{Internal validity} depends on accurate
contracts. Machine checking derives four NATS closure results and validates 17
certificates; traces cover observed runs only. One analyst authored
contracts and challenges. Preregistration and mutation tests constrain but do
not remove bias. \emph{External validity}: Kafka tests crashes and a
bounded fixed-set rejoin. Network partitions, replication faults, and fixed-set
changes are outside scope. Prevalence is not estimated.
\emph{Conclusion validity}: performance is descriptive. Repair minimality
depends on the declared set and cost order.

\section{Related Work}
\label{sec:related}

Work on \emph{authority lifetime} models revocation and ongoing use.
\emph{Enforceability over a supplied interface} assumes that interface is monitored; and
\emph{supervisory control} assumes a DES and specification. \system instead asks
what must close, whether the model reaches the effect frontier, and whether boundary
controls can realize the claim while required work completes. These questions must
be answered before finite control begins: a correct supervisor over the wrong
property or frontier proves the wrong application guarantee.

\noindent\textbf{Authority lifetime and open paths.}
Revocation work tracks issued or migrated authority
\cite{gligor1979revocation,flask1999}. Usage control (UCON) supports continuing
decisions during use~\cite{uconabc2004,distributedusagecontrol2006}. Under a
post-revocation policy, complete revocation matches lineage closure when it
stops new issuance and blocks every later rejected effect from issued authority.
\system instead asks whether the provider interface can make the
requested closure claim true. TOCTOU studies check/use changes
\cite{bishopdilger1996,mindthegap2025}. Work on the authorization--execution gap
examines broader divergence~\cite{authorizationexecutiongap2026}. GitHub MCP and
Kubernetes exhibit such changes~\cite{githubmcp,githubmerge,k8sadmission},
whereas NATS and Kafka retain issued authority. Effect closure gives these
authorization-bearing paths a common completion condition.

\noindent\textbf{Enforceability over a supplied interface.}
Complete mediation, runtime enforcement, and usage control tie guarantees to
observable or controllable events
\cite{schneider2000,ligatti2005,saltzerschroeder1975,basin2013enforceable}.
Linearizability and interposition study histories, missing hooks, and
races~\cite{herlihywing1990,garfinkel2003,dam2009monitor}. Agent systems enforce
rules, monitors, or permission graphs over chosen interfaces or
contracts~\cite{agentspec2026,forge2026,fava2026}. These
approaches enforce policies over a supplied interface. \system first asks whether
that interface reaches the required effect frontier and exposes enough
observation and control to report closure without blocking required work.

\noindent\textbf{Supervisory control and security.}
Ramadge--Wonham control and Lin--Wonham observability provide our
basis~\cite{ramadgewonham1987,linwonham1988}. A survey of 208
papers covers DES security from opacity to attacks and recovery
\cite{oliveira2023dessecurity}. Supervisory control also supports dynamic
network defense~\cite{rasouli2014dynamic}. We use this established theory to
decide boundary realizability. Before applying it, we construct the
provider-to-effect contract from provider evidence. It states the requested
closure and required completion, grounds the frontier, and poses the
realizability question.

\noindent\textbf{Mechanisms and abstraction.}
Binding, revalidation, and fencing are established mechanisms
\cite{schneider2000,ligatti2005,birrellnelson1984rpc}. Bind, Re-enter, and Fence
locate where they act along the path. Commit-time authorization~\cite{cta2026}
can realize Re-enter when an exposed commit gate rechecks authority for every
relevant effect. PORTICO~\cite{lingeringauthority2026},
AID-Guard~\cite{aidguard2026}, and \textsc{SoundGate}~\cite{stopmeansstop2026}
assume supplied bindings, tools, contracts, or framework controls.
\system instead asks whether the boundary reaches the effect frontier and exposes
such a gate or another repair that preserves safety and required completion.
Isolation and access control
constrain large language model (LLM) applications and MCP servers
\cite{isolategpt2025,agentbound2026}. Other MCP work catalogs threats
\cite{mcplandscape2026}.
End-to-end and tool-contract work asks whether interfaces preserve policy-relevant
choices~\cite{saltzerreedclark1984,howellkotz2000,dci2026}. End-to-end
authorization asks whether authority reaches an operation intact. Effect closure
asks when authority already issued on that path has no remaining route to a
policy-rejected effect. We test whether the modeled boundary remains realizable
within its stated scope. Coordination-avoidance work instead asks which
guarantees require coordination~\cite{bailis2014}.

\section{Conclusion}

Authorization state can converge while admitted work still reaches a rejected
effect. \system separates three questions: what must close, whether the model
reaches the effect frontier, and whether the interface can establish that claim
for every execution the contract allows while required work completes.

Six paths expose missing controls, hidden active work, and models that end too
early. Kafka shows a correct proof can stop before append. Blocking new use of
revoked authority and draining earlier in-flight work closes the studied path
without blocking unrelated work. Machine checking begins with the finite
contract; coverage rests on evidence.

A closure claim must name its subject, kind, policy, frontier, scope, and
configuration and be backed by the controls and tracking needed to make it true.
Within the declared scope, authorization ends for a lineage exactly when it can
issue no new authorization and none already issued can still reach a rejected
effect. 

\bibliographystyle{IEEEtran}
\bibliography{references}

\appendices
\normalsize
\setcounter{dbltopnumber}{4}
\renewcommand{\dbltopfraction}{0.95}
\renewcommand{\dblfloatpagefraction}{0.80}
\section{Open Science and Artifact Availability}
\label{app:artifact}
\urlstyle{tt}
The artifact at \url{https://anonymous.4open.science/r/effectbound/} contains the
implementation, schemas, contracts, evidence records, study protocols, results,
solver, independent checkers, and reconstruction scripts. A manifest records the
submitted snapshot's cryptographic checksum and excludes credentials, private
endpoints, disclosure correspondence, and Git history. After unpacking,
\texttt{make~all} reconstructs the recorded verdicts and runs the formal and
regression checks without contacting live providers. Reruns against managed
providers are optional. The focused target \texttt{make verify-verified-pipeline}
checks all 17 finite games and their certificates. Machine checking starts from
each finite contract. Reproduction cannot establish complete provider-path
coverage or provider conformance. The package omits an
8.1-MiB stress certificate that can be regenerated. Its size, hash, and check
time are recorded.

\section{Ethics Considerations}
\label{app:ethics}

\noindent\textbf{Controlled systems and data.}
The study uses public documentation and source code, controlled fixtures, disposable
Kubernetes, NATS, and Kafka clusters, and synthetic pull requests in a private
research repository. GitHub is the only managed service. We use no production
workloads, nonpublic third-party data, or unsolicited scanning. The GitHub work
comprises inspection of pinned source, 25 controlled REST replay pairs, and one
end-to-end run through the official server. All activity took place in test environments.

\noindent\textbf{Coordinated disclosure.}
We privately reported the GitHub MCP omission of the native reviewed-commit
precondition, NATS purge returning with dispatched callbacks, and Kafka ACL
deletion preceding in-flight appends. We requested review of the interfaces or
documentation. The reports alleged no contract violations and made no
provider-wide claims. At submission, NATS maintainers classified the separation
as intended. GitHub and Kafka discussions remained open. We did not report
documented Kubernetes behavior. Responses refined wording. The analysis
determines verdicts.

\noindent\textbf{Privacy and dual use.}
The sanitized artifact excludes credentials, private endpoints, correspondence,
and identifying metadata. It supports defensive analysis without discovering
paths or providing exploits. The analyst supplies the contract, policy, and scope.

\section{Formal Contract Semantics and Soundness}
\label{app:finite-detail}

This appendix defines information states, provenance, closure, certificates,
and deployment refinement. Grounding the finite contract remains empirical.

\begingroup
\setlength{\abovedisplayskip}{6pt plus 2pt minus 2pt}
\setlength{\belowdisplayskip}{6pt plus 2pt minus 2pt}
\setlength{\abovedisplayshortskip}{3pt plus 2pt}
\setlength{\belowdisplayshortskip}{4pt plus 2pt minus 2pt}

\subsection[Exact Information States and Decision Segments]{Exact Information
States and\\Decision Segments}

The compiler first maps observed history $o$ to the information set
\begin{equation}
\begin{aligned}
\mathsf{Bel}_I(o)=\{(w,s)\mid{}&
w\in W_P,\ \exists s_0\in\mathsf{Init}_P(w),\xi:\\[-1mm]
&s_0\xrightarrow[\,w\,]{\xi}_P s,\
\pi_I(\lambda(\xi))=o\}.
\end{aligned}
\end{equation}
The finite state $q_I(o)$ records this set and all derived controls, successors,
safety, completion, provenance, carrier, and closure data. Equal states must
agree on these data. In particular,
\begin{equation}
q_I(o_1)=q_I(o_2)\Longrightarrow
\Gamma_I(o_1)=\Gamma_I(o_2),
\label{eq:quotient-availability}
\end{equation}
and the same must hold for every compiled predicate. Let $\mathsf{Pref}(q)$ denote
represented prefixes and $\mathsf{Hist}(q)=\{\eta(p)\mid p\in\mathsf{Pref}(q)\}$
their effect histories. Write $\widehat q_I(p)=q_I(\pi_I(\lambda(p)))$ and
$\Gamma(q)$ for the controls induced on $q$.

$\mathsf{Seg}_I(p,u,\sigma,p')$ holds when the mediator issues $u$ at $p$,
$p'=p\cdot u\cdot\sigma$, and hidden moves $\sigma$ end at the next preventive
choice or the declared execution bound. The compiler sets
\begin{equation}
\begin{aligned}
\mathsf{Out}(q,u)
 &=\{\widehat q_I(p')\mid\exists p\in\mathsf{Pref}(q),\sigma:
 \mathsf{Seg}_I(p,u,\sigma,p')\},\\[-1mm]
\mathsf{Cell}_I(q,u)
 &=\{pu\sigma_{\leq t}\mid
 \exists p\in\mathsf{Pref}(q),\sigma,p':\\[-1mm]
 &\hspace{18mm}\mathsf{Seg}_I(p,u,\sigma,p')
 \land 0\leq t\leq|\sigma|\}.
\end{aligned}
\label{eq:compiler-segments}
\end{equation}
Thus $\mathsf{Out}$ gives endpoints and $\mathsf{Cell}_I$ gives all intermediate
execution prefixes. A cell is unsafe if any represented prefix
$p'\in\mathsf{Cell}_I(q,u)$ fails $\mathsf{SafePrefix}$. The executable checker
consumes this Boolean label. It does not reconstruct provider histories.

$\mathsf{Out}(q,u)$ covers modeled success, rejection, failed precondition,
unavailability, and a hang at the declared bound. Exactness preserves each prefix's controls,
outcomes, safety, goals, and the issuance, carrier, suffix, and policy relations
from which closure is derived. Multiple initial views use an
observation-indexed root. Otherwise $q_0$ is unique.

\noindent\textbf{Post-settlement invariant.}
The finite semantic source---the machine-readable contract consumed by Lean---supplies
a response flag, candidate set $\mathsf{Inv}$, and
admitted post-settlement successor relation $\rightarrow_{ps}$. It does not
supply a closure judgment. Every state in the candidate invariant must have an
explicit stutter self-loop. Lean rejects a candidate without these loops.
$\mathsf{Settled}(s)$ records the response. $\mathsf{PolicySafe}(s)$ holds
when every represented prefix satisfies $\mathsf{SafePrefix}$ and the stored
cell label records no unsafe intermediate prefix. $\mathsf{ClaimSafe}(s)$ checks
every claim emitted by that point.

Their information-state forms hold when the corresponding
predicate holds for every represented concrete state. Let
$\mathsf{Entry}(s)$ mean
that $s$ has emitted its declared response and settled safely, every observed
claim is true, and, if its world lies in $W_A$, its workload and required claims
have completed. A world in $W_P\setminus W_A$ may instead settle through a
declared safe rejection. Lean accepts the candidate only if
\begin{equation*}
\begin{aligned}
\mathsf{Entry}(s)&\Rightarrow s\in\mathsf{Inv},\\[-1mm]
s\in\mathsf{Inv}&\Rightarrow
  \mathsf{Settled}(s),\\[-1mm]
s\in\mathsf{Inv}&\Rightarrow
  \mathsf{PolicySafe}(s)\land\mathsf{ClaimSafe}(s),\\[-1mm]
s\in\mathsf{Inv}&\Rightarrow s\rightarrow_{ps}s,\\[-1mm]
\rightarrow_{ps}&\subseteq\mathsf{Inv}\times\mathsf{Inv},\\[-1mm]
s\rightarrow_{ps}s'\land{}&\mathsf{SuccessObserved}(\chi,s)\\[-1mm]
&\Rightarrow\mathsf{SuccessObserved}(\chi,s').
\end{aligned}
\end{equation*}
The final conditions confine $\rightarrow_{ps}$ to
$\mathsf{Inv}\times\mathsf{Inv}$, provide an explicit stutter transition at
every invariant state, and preserve every emitted claim.
Thus $\mathsf{ContractSettled}(q)$ requires every represented state to emit its
response and enter this invariant. Progress ends at entry. The invariant
preserves the claim. The successor inventory remains empirical. Lean checks
membership, inductivity, policy safety, and claim preservation. Neither
settlement nor invariant membership implies carrier, provider, or system
quiescence.

\subsection{Configuration History, Provenance, and Live Carriers}

Let $\mathcal H_\beta=((\beta_0,M_0),\ldots,(\beta_k,M_k))$ be the installed
history, with current pair $\operatorname{last}(\mathcal H_\beta)=(\beta,M)$.
Here $\beta$ identifies a configuration version and $M$ its strategy.
$\mathsf{Pref}_{\mathcal H_\beta}(q)$ contains admitted prefixes consistent with
the strategy active at each point. Appending $(\beta',M')$ does not re-filter
the past. We omit the history when clear.
For a claim emitted under version $\beta_e$, $q^{\beta_e}$ denotes the same
information state and full later configuration history, with $\beta_e$ marked
as the claim's configuration version. Closure predicates are state-indexed and do not
carry a separate configuration superscript. We write $q$ when the marked version
is clear.

For every issuable $a$, the analyst supplies a nonempty finite family
$\mathsf{Prov}(a)$ of nonempty lineage sets. Membership within one set is
conjunctive, while alternative sets are disjunctive.
The abstract semantics permits this relation. The evaluated contracts use the functional case
$\mathsf{Prov}(a)=\{\{g\}\}$.
The workload and policy also fix the normative relation
$\mathsf{Permitted}(a,e)$ introduced in Section~\ref{sec:problem}, with
$\mathsf{AuthScope}(a)=\{e\mid\mathsf{Permitted}(a,e)\}$. Let
$\mathcal K_{\rm mat}$ be the carriers represented in the semantic source. For each
$c\in\mathcal K_{\rm mat}$, the source supplies its authorization instance
$\mathsf{Auth}(c)$, a nonempty set $\mathsf{AuthEffects}(c)$ of effects that
justified its creation or continuation, and nonempty evidence identifiers
$\mathsf{EvidenceRefs}(c)$ for that handoff.
Lean checks the first three conditions for every represented carrier, live or
not, and the final two for every represented $\mathsf{Carries}$ fact:
\begin{equation*}
\begin{gathered}
\mathsf{AuthEffects}(c)\ne\varnothing,\\[-1mm]
\mathsf{AuthEffects}(c)\subseteq
 \mathsf{AuthScope}(\mathsf{Auth}(c)),\\[-1mm]
\mathsf{EvidenceRefs}(c)\ne\varnothing,\\[-1mm]
\mathsf{Carries}(c,a,H,q)\Rightarrow c\in\mathcal K_{\rm mat},\\[-1mm]
\mathsf{Carries}(c,a,H,q)\Rightarrow a=\mathsf{Auth}(c).
\end{gathered}
\end{equation*}
These checks establish link well-formedness. Evidence and physical outcomes remain grounded.

Each represented suffix $z$ records its carrier $c_z$ and physical effect $e_z$
separately. Let $e(\tau,t)$ be the effect added when suffix $\tau$ crosses $F$ at step $t$.
Write $\mathsf{SuffixCrosses}_{F}(q,z)$ when that suffix crosses $F$. Then
$\mathsf{MayEffect}(c,q)=\{e_z\mid c_z=c\land
\mathsf{SuffixCrosses}_{F}(q,z)\}$. In particular,
$\mathsf{MayEffect}(c,q)$ need not be a subset of
$\mathsf{AuthScope}(\mathsf{Auth}(c))$. GitHub's moved-head world has
$\mathsf{Carries}(c,a,\ldots)$ while the reachable merge of $h_2$ lies outside
the permission for $h_1$. Within $\kappa$, each
$\mathsf{SuffixCrosses}_{F}(q,z)$ corresponds to an admitted finite witness
$(c,a,H,\tau,t)$: $c_z=c$, $e_z=e(\tau,t)$, and its recorded verdict equals
$\varphi_{\rm app}(H\cdot\eta(\tau_{<t})\cdot e_z)$. Grounding supplies this
correspondence both ways between represented suffixes and declared in-scope
witnesses. Lean checks finite consistency. Provider completeness remains empirical.
Missing permission semantics, handoff evidence, or suffix semantics yields
\textsc{Unsupported}. An explicit denial makes the effect unsafe. Mere causal
position is insufficient.
To include every declared possible authorization origin, define
\begin{equation}
\mathsf{DerivedFrom}(a,g)\iff
\exists J\in\mathsf{Prov}(a):g\in J.
\end{equation}
Thus $(g_1\land g_2)\lor g_3$ makes $g_1,g_2$ joint origins and $g_3$ an
alternative. Within the declared inventory, $\mathsf{IssuedLedger}(q)$ records
exactly the represented instances issued at compatible prefixes and only grows: a
change from $\beta$ to $\beta'$ copies the ledger before new issues and never
removes an entry:
\begin{equation}
\begin{aligned}
\mathsf{IssuedFrom}(a,g,q)\iff{}&
\mathsf{DerivedFrom}(a,g)\\[-1mm]
&{}\land a\in\mathsf{IssuedLedger}(q).
\end{aligned}
\label{eq:issued-from}
\end{equation}

The live-carrier store $\mathsf{Live}(q)$ contains tuples $(c,a,H,p_c)$, with
$\mathsf{At}_I(c,a,H,q,p_c)\iff(c,a,H,p_c)\in\mathsf{Live}(q)$.
$\mathsf{Pending}_I(c,a,q,p_c)$ means that a supported creation of $c$ carrying
$a$ is represented at current prefix $p_c$, with no earlier event ending it.
Exactness is
\begin{equation}
\begin{aligned}
&(c,a,H,p_c)\in\mathsf{Live}(q)\\[-1mm]
&\quad\iff \mathsf{Pending}_I(c,a,q,p_c)\\[-1mm]
&\qquad{}\land a\in\mathsf{IssuedLedger}(q)
\land H\in\mathsf{Hist}(q)\\[-1mm]
&\qquad{}\land\mathsf{Carries}(c,a,H,q)\\[-1mm]
&\qquad{}\land p_c\in\mathsf{Pref}_{\mathcal H_\beta}(q)
\land H=\eta(p_c).
\end{aligned}
\label{eq:live-store-exact}
\end{equation}
A configuration-only switch extends
$\mathcal H_{\beta'}=\mathcal H_\beta\mathbin{\cdot}(\beta',M')$ and, before any
provider move, copies both stores:
$\mathsf{IssuedLedger}(q^+)=\mathsf{IssuedLedger}(q^-)$ and
$\mathsf{Live}(q^+)=\mathsf{Live}(q^-)$. Equation~\ref{eq:live-store-exact} at
$q^+$ uses $\mathcal H_{\beta'}$. Creating a carrier inserts its tuple. A
handoff or progress step updates its prefix and history. Only a represented,
evidence-supported end event removes it.

An unlisted creation, ambiguous end event, missing correlation, or unjustified
direction in Equation~\ref{eq:live-store-exact} yields \textsc{Unsupported}.
Configuration changes preserve issued instances and live carriers. Compatible
histories that disagree about $a$ remain separate witnesses. Grouped ACL sets or
epochs form one $g$ only with evidence. Provenance, issuance, live stores,
attempt matching, scope, configuration version, templates, and latched claims must agree for all
histories merged into the same $q$. The solver infers none of these relations.
Lean derives closure from the finite relations. The checker and solver do not
establish that the represented provider inventories are complete. Completeness
remains an empirical obligation.

\subsection{Closure Predicates and Typed Claims}

Let $\mathsf{Fut}_{\mathcal C}(p,g)$ denote the admitted in-scope extensions of
$p$ containing an attempt that matches $g$:
\begin{equation}
\begin{aligned}
&\mathsf{FutureUseClosed}(g,q)\\[-1mm]
&\quad\iff\nexists p,\zeta,a:\quad p\in\mathsf{Pref}(q)\\[-1mm]
&\hspace{20mm}\land\zeta\in\mathsf{Fut}_{\mathcal C}(p,g)\\[-1mm]
&\hspace{20mm}\land\mathsf{Issues}(\zeta,a)\\[-1mm]
&\hspace{20mm}\land\mathsf{DerivedFrom}(a,g).
\end{aligned}
\label{eq:future-use-closed}
\end{equation}

For already-issued authority, $\mathsf{Tail}_I$ records provider moves until the
next mediator choice:
\begin{equation}
\begin{aligned}
&\mathsf{Tail}_I(c,a,H,q)\\[-1mm]
&=\{\tau\mid\exists q_b,p,u,\sigma,p',k,p_c:\\[-1mm]
&\quad q_b=\widehat q_I(p)
 \land p\in\mathsf{Pref}_{\mathcal H_\beta}(q_b)\\[-1mm]
&\quad{}\land u=M_{\mathcal H_\beta}(p)\\[-1mm]
&\quad{}\land\mathsf{Seg}_I(p,u,\sigma,p')\\[-1mm]
&\quad{}\land 0\leq k\leq|\sigma|\\[-1mm]
&\quad{}\land p_c=pu\sigma_{<k}
 \land\tau=\sigma_{\geq k}\\[-1mm]
&\quad{}\land\mathsf{At}_I(c,a,H,q,p_c)\}.
\end{aligned}
\label{eq:compiler-tails}
\end{equation}
Tail $\tau$ has no mediator action before the next preventive choice. At a
frontier crossing, $e(\tau,t)\in\mathcal E$ is the effect record defined above.
Define
\begin{equation*}
\begin{aligned}
\mathsf{AuthorizedSafe}(a,H',e)
\equiv{}&\mathsf{Permitted}(a,e)\\[-1mm]
&{}\land(\varphi_{\rm app}(H'\cdot e)=1).
\end{aligned}
\end{equation*}
Carrier $c$ is open iff such a tail crosses the frontier and fails this
authorization-and-policy test:
\begin{equation}
\begin{aligned}
&\mathsf{Open}_{\varphi_{\rm app}}(c,a,H,q)\iff
\mathsf{Carries}(c,a,H,q)\\[-1mm]
&\quad{}\land\exists\tau\in\mathsf{Tail}_I(c,a,H,q),\,t:\\[-1mm]
&\quad{}\mathsf{Crosses}_{F}(\tau,t)
\land\neg\mathsf{AuthorizedSafe}
  (a,H\cdot\eta(\tau_{<t}),e(\tau,t)).
\end{aligned}
\label{eq:open-continuation}
\end{equation}
An instance is effect-closed iff no represented history compatible with $q$ has
an open carrier:
\begin{equation}
\begin{aligned}
&\mathsf{EffectClosed}_{\varphi_{\rm app}}(a,q)\\[-1mm]
&\quad\iff\nexists H,c:\quad H\in\mathsf{Hist}(q)\\[-1mm]
&\hspace{29mm}\land\mathsf{Open}_{\varphi_{\rm app}}(c,a,H,q).
\end{aligned}
\label{eq:effect-closed}
\end{equation}
Projection is compositional: $\eta(xy)=\eta(x)\eta(y)$. Any tail witness has
$H=\eta(p_c)$ and $H\eta(\tau_{\leq t})=\eta(p_c\tau_{\leq t})$ before the next
observation. If $u$ crosses $F$, $\mathsf{Cell}_I$ checks its $t=0$ prefix for
policy safety. $\mathsf{CanCross}_F(c,a,H,q)$ abbreviates the existence of
$\tau\in\mathsf{Tail}_I(c,a,H,q)$ and $t$ such that $\mathsf{Crosses}_F(\tau,t)$. Stronger
quiescence is
\begin{equation}
\begin{aligned}
&\mathsf{Quiescent}(a,q)\\[-1mm]
&\quad\iff\nexists H,c:\quad H\in\mathsf{Hist}(q)\\[-1mm]
&\hspace{19mm}\land\mathsf{Carries}(c,a,H,q)\\[-1mm]
&\hspace{19mm}\land\mathsf{CanCross}_{F}(c,a,H,q).
\end{aligned}
\label{eq:quiescent}
\end{equation}

Write $\bar\chi@(\kappa,\beta)$ for a typed template instantiated under scope
$\kappa$ and configuration version $\beta$. Exact matching requires
$\mathsf{Match}_{\kappa,\beta}(\bar\chi,\chi)$ iff
$\chi=\bar\chi@(\kappa,\beta)$.
For claim scope $\kappa'$, interpretations are
\begin{equation*}
\begin{aligned}
&\mathcal I_{\varphi_{\rm app},\kappa}(\mathsf{FutureUseClosed}(g)@
  (\kappa',\beta),q)\\[-1mm]
&\quad\equiv(\kappa'=\kappa)\land
  \mathsf{FutureUseClosed}(g,q^{\beta}),\\[-1mm]
&\mathcal I_{\varphi_{\rm app},\kappa}(\mathsf{InstanceEffectClosed}(a)@
  (\kappa',\beta),q)\\[-1mm]
&\quad\equiv(\kappa'=\kappa)\land
  \mathsf{EffectClosed}_{\varphi_{\rm app}}(a,q^{\beta}),\\[-1mm]
&\mathcal I_{\varphi_{\rm app},\kappa}(\mathsf{LineageClosed}(g)@
  (\kappa',\beta),q)\\[-1mm]
&\quad\equiv(\kappa'=\kappa)\land\mathsf{FutureUseClosed}(g,q^{\beta})\land{}\\[-1mm]
&\qquad\forall a:\ \mathsf{IssuedFrom}(a,g,q^{\beta})\Rightarrow\\[-1mm]
&\hspace{29mm}\mathsf{EffectClosed}_{\varphi_{\rm app}}(a,q^{\beta}).
\end{aligned}
\end{equation*}
After a switch to $\beta'$, $q^{\beta}$ retains the complete later configuration
history. Its quantifier therefore covers instances and carriers inherited from
$\beta$ as well as those issued after the switch. A claim emitted after the
switch uses $\beta'$. Claims emitted under $\beta$ remain tied to $\beta$ and the
original scope $\kappa$, so $\kappa'\ne\kappa$ fails the exact match. Each
evaluated contract covers one configuration version. Preserving a claim emitted
under an earlier version after a later one is installed requires a separate
deployment proof. The finite checker records the later history but does not
infer cross-version preservation from it.

\subsection{Finite Decision, Certificates, and Obstruction}

If $q_0\in\mathcal W$, each ranked non-goal state has an action whose successors
have lower rank. Policy and claim safety hold until $G$. In validated games,
every available action has an admitted response. Outside $\mathcal W$, a state
is inadmissible or every available action has a successor outside $\mathcal W$.
Repeating this argument rules out every observation-based strategy.

$\mathsf{Twin}_I(w_A,w_U)$ holds when the provider can keep the worlds
observationally equivalent against every mediator restricted to $I$ until the
required completion becomes impossible in $w_A$ or a forbidden prefix occurs in
$w_U$. Applying this condition to the universal predecessor gives
Section~\ref{sec:formal}'s authorization-twin obstruction.

Sure-winning randomization adds no power: under
$\mathsf{SureWin}_{\rm rand}(q_0)$, every action in the strategy's support
satisfies the universal predecessor, so choosing one wins. A deterministic
strategy is the special case that chooses one action with probability~1. Thus
\begin{equation}
\mathsf{SureWin}_{\rm rand}(q_0)
\quad\Longleftrightarrow\quad
q_0\in\mathcal W.
\label{eq:purification}
\end{equation}
We make no expected utility or almost-sure claim.

\noindent\textbf{Verified translation and executable certificates.}
Machine checking begins with a finite semantic source built from the evidence.
An untrusted compiler proposes the finite game, state mapping, and reachability
ranks. The source supplies the Boolean value of
$\neg\mathsf{SafePrefix}$ for each represented base prefix. Machine checking
uses this value without reconstructing provider histories or reevaluating
$\varphi_{\rm app}$. For each represented frontier-crossing suffix, Lean combines
$\mathsf{Permitted}$ with the recorded policy verdict.
Lean checks source well-formedness, the post-settlement invariant, exact initial
and reachable states, observations, safety including $\mathsf{ClaimSafe}$,
required goals and claims, actions, responses, and successor coverage, then
constructs the finite information-state game.

The executable checker consumes the serialized certificate for the proposed
game. For a winning certificate, it checks the ranked region, actions, nonempty
responses, and decreasing ranks. For a losing certificate, $L$ is the certified
set. The checker verifies $q_0\in L$ and $q\notin G$ for every $q\in L$. At each
admissible $q\in L$, every available action must have an admitted successor in
$L$. Non-admissible states require no witness.

Acceptance establishes the winning condition for a winning certificate and
rules it out for a losing certificate. After the semantic source is constructed,
neither compiler nor solver is trusted. Lean proves the executable checker sound. A
separate Python check verifies the certificate inventory, origin hashes, and totals
of 2 winning and 15 losing verdicts.

Lean checks the closure definitions and synthetic examples that separate the
predicates and exercise provenance, including $(g_1\land g_2)\lor g_3$. Its
executable accepts all 17
published finite-game certificates.
The four evaluated NATS contracts are the purge base and its Bind,
native-terminal, and Tracked Drain variants. Each declares the same
$\mathsf{InstanceEffectClosed}(a_{\rm purged\_delivery})$ requirement. Lean
checks the materialized scope and configuration match flags. From the supplied
issuance relation, carrier provenance, frontier-crossing suffixes, permission
relation, application-policy verdicts, and coverage relations, it derives whether
the requirement and resulting $\mathsf{ClaimSafe}$ condition hold. Bind and Tracked Drain satisfy the
requirement. The base and native-terminal variants fail it. The checker supports
future-use, instance-effect, and lineage claims, while these evaluated variants
exercise instance-effect closure. In these contracts, the scoped coverage
assertion states that every live carrier and every admitted frontier-crossing
suffix of those carriers is represented in the declared scope. The 14
missing-effect-history records are grounding diagnostics, inventoried separately
from game verdicts. The checker takes these coverage assertions and source
relations as semantic inputs. Reconstructing them from provider traces remains an empirical obligation.
The proof does not establish provider grounding, completeness of the
supplied route and successor inventories, or artifact provenance.

\subsection{Deployment-to-Contract Soundness}

The deployed mediator follows the repaired strategy and implements the boundary
change. Refinement preserves observations, effects, responses, claims, and
completion events, so contract predicates apply to each refined trace.

\begin{corollary}[Deployment-to-contract soundness]
Let $\mathcal C$ be a finite provider-boundary contract, $\Delta$ a candidate
repair, and $M_\Delta$ the winning strategy for $\mathcal C\oplus\Delta$ under
configuration version $\beta_\Delta$. Assume
$\mathsf{PreservesExternalContract}(\mathcal C,\Delta)$, that the deployed mediator follows
$M_\Delta$, and that every admitted trace of the deployed provider boundary
refines $\mathcal C\oplus\Delta$.
Then every prefix is safe, and each visible
closure response satisfies its claim under $(\kappa,\beta_\Delta)$. Each corresponding required world under
$\iota_\Delta$ observes a matching claim for every $\Xi$ template by required
completion. Equivalently, the refined executions satisfy
$\mathsf{PolicySafe}_{\iota_\Delta(W_P)}(M_\Delta)$,
$\mathsf{ClaimsSafe}_{\iota_\Delta(W_P)}(M_\Delta)$,
$\contractsettled_{\iota_\Delta(W_P)}(M_\Delta)$, and
$\requiredcomplete_{\iota_\Delta(W_A)}(M_\Delta)$.
\end{corollary}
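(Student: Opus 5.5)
The proof is a composition of three facts: Theorem~\ref{thm:realizability} applied to the repaired contract, Corollary~\ref{cor:continuation-closure}, and a refinement argument that moves contract-level predicates onto deployed traces. I would start from the hypothesis that $M_\Delta$ is winning for $\mathcal C\oplus\Delta$. By Theorem~\ref{thm:realizability}, instantiated at $\langle P_\Delta,I_\Delta,\varphi_{\rm app},\mathbf W_\Delta\rangle$, this gives $\mathsf{PolicySafe}_{W_P^\Delta}(M_\Delta)$, $\mathsf{ClaimsSafe}_{W_P^\Delta}(M_\Delta)$, $\contractsettled_{W_P^\Delta}(M_\Delta)$ and $\requiredcomplete_{W_A^\Delta}(M_\Delta)$ over the admitted runs of the repaired contract. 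Concretely, the ranked region ensures that every reachable information state is $\mathsf{Admissible}$ and that rank decreases into $G$. Corollary~\ref{cor:continuation-closure} then shows that each observed success satisfies the closure claim it declares, interpreted at $q^{\beta_\Delta}$.

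The second step transports these properties through $\iota_\Delta$ back to the original workload. $\mathsf{PreservesExternalContract}$ makes $\iota_\Delta$ a bijection $W_P\to W_P^\Delta$ that preserves $W_A$ membership, so $\iota_\Delta(W_P)=W_P^\Delta$ and $\iota_\Delta(W_A)=W_A^\Delta$. It also gives $\Xi_\Delta(\iota_\Delta(w))=\Xi(w)$ for $w\in W_A$. Required completion in $\iota_\Delta(w)$ therefore emits a claim $\chi$ with $\mathsf{Match}_{\kappa,\beta_\Delta}(\bar\chi,\chi)$ for every original template $\bar\chi\in\Xi(w)$. Effect preservation under $\eta$ means that $\varphi_{\rm app}$ and $\mathsf{Permitted}$ judge the same effect records, so $\mathsf{SafePrefix}$ in the repaired contract coincides with the original application-level safety condition. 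The scope check $\kappa'=\kappa$ in $\mathcal I$ holds because the candidate stays within $\kappa$, and $\beta_\Delta$ is the version carried by each emitted claim.

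The third step is the refinement argument, which I expect to be the main obstacle because the statement only says "refines" and I would have to fix its meaning. I would take it to mean a map from each deployed trace to an admitted run $\rho\in\mathsf{Runs}_{P_\Delta,W_P^\Delta}(M_\Delta)$ that preserves, prefix by prefix, observations under $\pi_I\circ\lambda$, effect projections $\eta$, responses, emitted claims, and completion events. This is what the preceding paragraph asserts. Under that reading I would argue as follows.

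\begin{itemize}
\item \emph{Prefix safety.} Each deployed prefix maps to an admitted prefix with the same $\eta$-history and authorization crossings, so $\mathsf{SafePrefix}$ transfers.
\item \emph{Claim truth.} Each visible success maps to a success observed at an information state reachable under $M_\Delta$, where $\mathsf{ClaimSafe}$ holds. Because claim truth is a predicate of that information state, and not of the concrete deployed run, it holds for the deployed response.
\item \emph{Settlement and completion.} Both are properties of every admitted run. Refinement places each deployed trace among those runs, so eventual settlement and $\mathsf{RequiredDone}$ hold there too.
\end{itemize}

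The delicate point is the claim-truth transfer. The interpretation quantifies over all histories in $q^{\beta_\Delta}$, including live carriers and the $\mathsf{IssuedLedger}$, and not just the observed trace. I would handle it by noting that, by exactness, the deployed trace's information state equals $\widehat q_I$ of its refined image. Persistence of the claim after return then follows from the post-settlement invariant, since $\rightarrow_{ps}$ preserves $\mathsf{SuccessObserved}$ and stays within $\mathsf{Inv}$.
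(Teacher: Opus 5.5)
Your proposal is correct and follows the paper's own proof sketch. Refinement confines deployed execution to the admitted runs of $\mathcal C\oplus\Delta$ under $M_\Delta$, the winning strategy keeps every reachable information state admissible (so $\mathsf{PolicySafe}$ and $\mathsf{ClaimSafe}$ transfer), and Theorem~\ref{thm:realizability} supplies settlement and required completion; your explicit transport through the bijection $\iota_\Delta$ and the persistence argument via $\rightarrow_{ps}$ spell out steps the paper leaves implicit, though the equality of information states you attribute to exactness actually follows from refinement preserving observations, since $\widehat q_I(p)=q_I(\pi_I(\lambda(p)))$.
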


\emph{Proof sketch.}
Refinement confines execution to the repaired contract; the deployed mediator applies its
winning strategy, and admissibility enforces policy safety and
$\mathsf{ClaimSafe}$. Apply Theorem~\ref{thm:realizability}.

A checked trace establishes its own prefix safety and emitted closure claims;
$\contractsettled_{\iota_\Delta(W_P)}(M_\Delta)$ and
$\requiredcomplete_{\iota_\Delta(W_A)}(M_\Delta)$ across worlds require
a refinement proof over all executions or mediation of every in-scope execution.

\endgroup

\section{Evidence, Scope, and Validation}
\label{app:confidence-detail}

The artifact retains the per-run records, manifests, preregistration material,
selection hashes, and verifier outputs summarized here.

\noindent\textbf{Scope and performance.}
The 644-line Kafka prototype instruments \texttt{handleProduceAppend}. Kafka
remains unmodified. Its three-broker fixture processed 1,310,720 records in 160
error-free trials. At $c=(32,128)$, Stock and full fixed-set throughput medians
are $(10.328,10.319)$ and $(8.340,8.281)$ thousand records/s, with p99 medians
of $(82.0,279.0)$ and $(170.7,302.2)$ ms. Full fixed-set throughput ranges are
$(-40.1$--$-1.9,-42.4$--$+23.4)\%$ and p99 ranges are
$(+36.5$--$+156.9,-105.3$--$+133.7)$ ms. The direct Stock--old-instance-fence
contrast was computed after data collection from the same matched observations.
The artifact reports intermediate variants, startup and host conditions,
old-instance-fence ranges, and separate recovery timings.

\noindent\textbf{Coverage and abstention.}
The 12 contracts comprise six \textsc{Unrealizable} path contracts with repairs,
two \textsc{Realizable} controls, and four \textsc{Unsupported} cases.
The abstention audit ranks each stratum's identifiers by SHA-256 over seed,
stratum, and identifier with NUL separators. Because the author knew the seed
when framing candidates, selection is deterministic but neither independent nor
representative. Seven of eight yield \textsc{Unsupported}, without estimating
applicability or prevalence. In a preregistered 20-operation audit, SDKs retain
all native primitives, while tools preserve 11, omit two, misplace five, and weaken
two. In the NATS time-bound negative control, a retry after 0.8\,s exceeds the
0.5\,s duplicate window and commits twice. Kafka traces link request identity,
authorization epoch, broker boot, deletion, return, and local append but cover
only observed executions.

\noindent\textbf{Falsification and finite checks.}
The evidence graph has 95 obligations and 41 records. Each of the 38 provider
premises used in grounded verdicts has two support types. Of 33 eligible
leave-one-out removals, 25 yield \textsc{Unsupported}. All 30 critical mutations
force abstention, while six benign edits preserve the result. Frontier,
time-bound, repair-set, and exclusion mutations alter results as expected.

\end{document}